\titleformat*{\section}{\Large\bfseries}
\titleformat*{\subsection}{\large\bfseries\itshape}
\newtheorem{theorem}{Theorem}
\newtheorem{remark}{Remark}
\newtheorem{assumption}{Assumption}
\newcommand{\G}{\mathcal{G}} 		
\newcommand{\E}{\mathcal{E}} 		
\newcommand{\V}{\mathcal{V}} 		
\newcommand{\N}{\mathcal{N}} 		
\newcommand{\R}{\mathbb{R}}
\newcommand{\D}{\mathbb{D}} 		
\newcommand{\Lbig}{\mathbb{L}}		
\providecommand{\keywords}[1]{\textbf{\textit{Keywords}} #1}
\renewenvironment{abstract}
  {\small\quotation
  {\bfseries\noindent{\abstractname}\par\nobreak\smallskip}}
  {\endquotation}
\begin{document}

{\begingroup
\begin{center}
\textbf{\Large{Continuous-time Proportional-Integral Distributed Optimization for Networked Systems}}\\
\vspace*{1\baselineskip} 
\large{Greg Droge$^a$\footnote{To whom correspondence should be addressed.  E-mail: gregdroge@gatech.edu} and Hiroaki Kawashima$^b$ and Magnus Egerstedt$^a$}\par
\vspace*{1\baselineskip} 
\small{$^a$School of Electrical and Computer Engineering, Georgia Institute of Technology, Atlanta, GA 30332, USA.} \\
\small{$^b$Graduate School of Informatics, Kyoto University,  Kyoto 6068501,  Japan.}\\
\end{center}
\endgroup}




\setstretch{1.0}
\begin{abstract}
\noindent In this paper we explore the relationship between dual decomposition and the consensus-based method for distributed optimization.  The relationship is developed by examining the similarities between the two approaches and their relationship to gradient-based constrained optimization.  By formulating each algorithm in continuous-time, it is seen that both approaches use a gradient method for optimization with one using a proportional control term and the other using an integral control term to drive the system to the constraint set. Therefore, a significant contribution of this paper is to combine these methods to develop a continuous-time proportional-integral distributed optimization method.  Furthermore, we establish convergence using Lyapunov stability techniques and utilizing properties from the network structure of the multi-agent system. \par\vspace*{1\baselineskip} 
\noindent\keywords{\\Distributed Optimization, Multi-agent control, Networked-systems}
\end{abstract}

\setstretch{1.5}

\section{Introduction}
\label{sec:Intro}
In recent years, much attention has been given to the control, optimization, and coordination of multi-agent systems.  Multi-agent systems present many challenging problems, while having applications as diverse as data fusion in sensor networks to multiple robots acting in collaboration, e.g. \cite{Olfati-Saber2007, Mesbahi2010, Shamma2007, Olfati-Saber2004, Jadbabaie2003}.  A significant amount of research has been produced on the design of algorithms to allow agents to collaborate and achieve desirable global properties in a distributed manner, e.g. \cite{Mesbahi2010, Shamma2007, Palomar2010}.  In this paper, we will address the problem of distributed optimization, where agents in the network collaborate to minimize a global cost and each agent uses only locally available information, defined by the network structure.  

While many classifications of distributed optimization algorithms may exist, a distinction has recently been made in both \cite{Wei2012} and \cite{Kvaternik2012} that distributed optimization techniques can be divided between two categories: consensus-based gradient methods and decomposition or primal-dual methods.  The consensus-based approach is characterized by algorithms where, at each time step, every agent takes a gradient step along with an averaging or consensus step to reach agreement on the variables being optimized, e.g. \cite{Kvaternik2012, Wang2010, Nedic2009, Wang2011a}.  In contrast, decomposition methods distributedly reach agreement by exploiting the dual of the problem, e.g. \cite{Terelius2011, Rantzer2007, Rantzer2009}, which requires the added collaborative update of pricing or dual variables, e.g. \cite{Nedic2009a, Luenberger2008, Boyd2004}.  Of particular interest is the decomposition method for multi-agent systems presented in \cite{Terelius2011} along with the gradient-based solution for dual problems first presented in \cite{Arrow1958}.  When combined, these methods allow for a gradient-based multi-agent distributed optimization technique that, for simplicity, we refer to as dual-decomposition.

In this paper, we show that the consensus-based and dual-decomposition gradient algorithms are actually very closely related when examined in context of the underlying constrained optimization problem that is solved by these methods.  Specifically, we formulate both the mentioned dual-decomposition method and the consensus-based method in \cite{Kvaternik2012} in control theoretic terms to draw parallels and gain intuition behind why they can naturally be joined together.  In fact, it will become apparent that dual-decomposition is very closely related to integral (I) control and the consensus method is closely related to proportional (P) control.  Therefore, a significant contribution of this paper is to combine these two methods to form a new, proportional-integral (PI) distributed optimization method.  This formulation will be similar to the PI distributed optimization method introduced in \cite{Wang2010} and extended in \cite{Wang2011a, Gharesifard2012}.  However, due to the fact that we create the PI optimization method from the perspective of the dual-decomposition method, which involves a set of constraints associated with the interconnections of agents, it enables us to form a different type of integral control term and reduce the required communication.

While much of the work on distributed optimization has been developed in discrete-time formulations, which are amenable for implementation, e.g. \cite{Wei2012, Nedic2009, Terelius2011, Lobel2008}, a great deal of work recently has been made in continuous-time \cite{Kvaternik2012, Wang2010, Wang2011a, Rantzer2007, Rantzer2009, Kvaternik2011, Gharesifard2012}.  Continuous-time analysis has proven useful as it allows Lyapunov stability conditions to be directly applied to the update-equations for convergence analysis.  It also allows for an intuitive connection between the optimization algorithm proposed in this paper and proportional-integral control.  Moreover, a discretization of the framework proposed in this paper does not pose a significant contribution.  The proportional element has been evaluated in discrete time in \cite{Nedic2009} and the integral element has been evaluated in discrete time in \cite{Terelius2011, Rantzer2007, Rantzer2009}.  Furthermore, a closely related PI distributed optimization algorithm developed in \cite{Wang2010, Wang2011a, Gharesifard2012} (discussed further in Section \ref{ssec:pi_optimization}) was discretized in \cite{Wang2010}.


The remainder of this paper will proceed as follows.  We first introduce the necessary background material for the analysis of the distributed optimization algorithms, including the problem formulation, the graph-based multi-agent model of the network, and a parallel to gradient-based constrained optimization.  This background will allow us to present both dual-decomposition and the consensus based method in Sections \ref{sec:dualDecomposition} and \ref{sec:consensus}.  These two methods are then combined in Section \ref{sec:PIOptimization} to create a PI distributed optimization method.  We further develop this method by presenting a formulation which is scalable to a larger class of multi-agent systems in Section \ref{sec:scalability}.  The paper is then concluded with some final remarks in Section \ref{sec:conclusion}.

\section{Preliminaries}  
\label{sec:Background}
This section introduces the background information necessary to characterize PI distributed optimization.  It begins with the formulation of the distributed optimization problem that is addressed in this paper.  Following this, the graph-based model of the multi-agent network will be introduced.  Gradient-based constrained optimization is then discussed from a high level viewpoint to develop intuition about the underlying relationship between dual-decomposition and the consensus based method.  As similarities to PI control will become readily apparent, this section ends with a brief introduction of the PI control metrics that are used to compare these methods.

\subsection{Problem formulation}
We address the distributed optimization methods in terms of the problem formulation presented in \cite{Nedic2009} and continued in \cite{Kvaternik2012, Wang2010, Wang2011a, SundharRam2010, Gharesifard2012, Kvaternik2011}.  Specifically, assume that the function to be optimized is a summation of strictly-convex costs, i.e. 
\begin{equation} \label{eq-primalproblem_orig}
   \min_{x} \sum_{i=1}^N f_i(x) ,
\end{equation}
where $x \in \mathbb{R}^n$ is the parameter vector being optimized, $N$ is the number of agents,  and agent $i$ only knows its individual cost, $f_i(x)$.  The individual costs can be derived naturally from a distributed problem as done in \cite{Palomar2010} for resource allocation, or can be ``designed'' as done, for example, in \cite{Keviczky2006, Droge2013}, where a central cost is split into separable components which are then assigned to the individual agents.  

To be able to establish convergence to the global minimum certain convexity assumptions are made on the cost.  Note that a convex function is defined as a function that satisfies:
\begin{equation}
f(\theta x + (1-\theta) y) \leq \theta f(x) + (1- \theta) f(y)
\label{eq:convexDefinition}
\end{equation}
where $0 < \theta < 1$.  A function is strictly-convex if strict inequality holds in (\ref{eq:convexDefinition}) (see, for example, \cite{Boyd2004} for a thorough overview of convex functions and their properties).  The following assumptions about the costs are used throughout the paper:
\begin{assumption}
$f_i(x) : \mathbb{R}^n \rightarrow \mathbb{R}$, $i \in \{1,...,N\}$, are convex, twice continuously differentiable functions and the summation $\sum_{i=1}^N f_i(x)$ is strictly-convex, twice continuously differntiable function. 
\label{as:strictlyconvex}
\end{assumption}
\begin{assumption}
The solution $f^* = \min_{x} \sum_{i=1}^N f_i(x)$ and respective optimal parameter vector, $x^*$, exist and are finite.
\label{as:solutionexistence}
\end{assumption}

\begin{remark}
We note that the differentiability assumption has been relaxed in many of the references to address subgradient optimization.  However, we do not concern ourselves with relaxing this assumption as it does not add to the development of the paper.
\end{remark}

For sake of clarifying the notation, one key point must be stressed.  To perform distributed optimization, each agent will maintain its ``own version'' of the variables, denoted as $x_i \in \R^n$, with the constraint that $x_i = x_j$ $\forall$ $i,j \in \{1, ..., N\}$.  This will allow (\ref{eq-primalproblem_orig}) to be expressed as
\begin{equation}
\label{eq:primalproblem_constrained}
	\min_{x_i, i = 1, ..., N} \sum_{i=1}^N f_i(x_i) .
\end{equation}
$$
\mbox{s.t. } x_i = x_j \mbox{ } \forall \mbox{ } i,j \in \{1, ..., N\}
$$
To perform the optimization in a distributed manner, the equality constraints are relaxed.  Algorithms differ in the manner that they force agents to return to the constraint set.


\subsection{Networked multi-agent systems}
\label{ssec:networkedSystems}
We now introduce the terminology and properties of multi-agent systems that will be used to formulate the distributed optimization algorithms and discuss their convergence.  The term ``agent'' is used to refer to a computing component and it is assumed that agents only communicate with each other through a defined, static network topology.  This is representative of a great number of different multi-agent systems, from communication networks to teams of robots, e.g. \cite{Mesbahi2010, Shamma2007, Rantzer2009}.  

The interconnections of the network are represented through an undirected graph $\G (\V, \E)$.  The set of nodes, $\V$, is defined such that $v_i \in \V$ corresponds to agent $i$.  Communication constraints are represented through the set of edges in the graph, $\E \subseteq \V \times \V$, where $(v_i, v_j) \in \E$ iff agents $i$ and $j$ can directly communicate. The number of agents is then given by $|\V| = N$ and the number of communication links is given by $|\E| = M$.  To prove convergence of the distributed optimization methods, the following assumption on the graph topology is made
\begin{assumption}
The graph $\G(\V, \E)$ is connected.
\label{as:ConnectedGraph}
\end{assumption}

Associated with this graph are two important, and related matrices.  The first is the incidence matrix, $D \in \R^{N \times M}$ which is formed by arbitrarily assigning an orientation to each edge and can be defined as 
\begin{equation}
	D = [d_{ik}] = 
	\begin{cases}
		1 & \mbox{edge } k \mbox{ points to node } i \\
		-1 & \mbox{edge } k \mbox{ originates at node } i \\
		0 & \mbox{otherwise}
	\end{cases}.
\label{eq:indicinceMatrix}
\end{equation}
The second matrix, the graph Laplacian, is closely related to $D$ and can be defined as $L = L^T = DD^T \in \mathbb{R}^{N \times N}$.  Note that the resulting values for the elements of $L$ are independent of the orientation assigned to each edge, \cite{Mesbahi2010}.

We utilize both the incidence matrix and the graph Laplacian to form larger, aggregate matrices to incorporate the fact that each agent will be maintaining an entire vector of values.  First, let $x_{ij}$ denote the $j^{th}$ element of $x_i$, $z_j \triangleq [x_{1j}, x_{2j}, ... , x_{Nj}]^T$ is the combination of all the $j^{th}$ elements, and $z \triangleq [z_1^T,...,z_n^T]^T \in \mathbb{R}^{Nn}$ is the aggregate state vector.  The aggregate matrices can then be written as $\D \triangleq I_n \otimes D$  and $\Lbig \triangleq  I_n \otimes L$.  This notation expresses the concept that an aggregate graph is formed where there are $n$ replicas of $\G$, each corresponding to one of the elements of the vector being optimized.  The aggregate graph will not be connected, but have $n$ connected components, given Assumption \ref{as:ConnectedGraph}.  Therefore, the aggregate Laplacian will have the following properties (see, for example, \cite{Mesbahi2010}):
\begin{enumerate}
\item [(1)] $\Lbig = \Lbig^T = \D \D^T$
\item [(2)] $\Lbig \succeq 0$
\item [(3)] The eigenvectors associated with the zero eigenvalues of the aggregate Laplacian are $\alpha \otimes \mathbf{1}$, where $\alpha \in \mathbb{R}^n$
\item [(4)] If $\dot{z} = -\Lbig z$, the solution, $\bar{z} = z(t)$ as $t \longrightarrow \infty$, will be the projection of $z(0)$ onto the set $\alpha \otimes \mathbf{1}$ for $\alpha \in \mathbb{R}^n$.  Moreover, the vector $-\Lbig z$ will point along a line orthogonal to the set $\{ \alpha \otimes \mathbf{1} | \alpha \in \mathbb{R}^n \}$.
\end{enumerate}

One further property that will be exploited throughout the paper comes from the incidence matrix.  The constraint in (\ref{eq:primalproblem_constrained}) that $x_i = x_j$ $\forall$ $i,j \in \{1, ..., N\}$ can be written as $\D^T z = 0$.  This can be verified by first considering the scalar case where $n=1$ and $\D = D$.  $D^T z = 0$ will enforce that $x_{k_1} - x_{k_2} = 0$, where $k_1$ and $k_2$ correspond to the verticies associated with edge $k$.  Then through Assumption \ref{as:ConnectedGraph}, $x_i = x_j$ $\forall$ $i,j \in \{1, ..., N\}$.  The same argument can be extended to $n>1$ case by noting that: 
$$\D^T z = \begin{bmatrix} D^T z_1 \\ \vdots \\ D^T z_n \end{bmatrix}.$$

Finally, this notation allows the distributed optimization problem to be presented in a compact form:
\begin{equation}
	\begin{split}
	\min_z f(z) \\
	\mbox{s.t. } h(z) = 0.
	\end{split}
\label{eq:constrained_opt}
\end{equation}
where $f(z) = \sum_{i=1}^N f_i(x_i)$ and $h(z) = \D^T z$.

\subsection{PI control as gradient method for constrained optimization}
\label{ssec:constrained_opt}
We now take note of the structure of (\ref{eq:constrained_opt}) to give intuition to the relationship between the gradient methods presented in Sections \ref{sec:dualDecomposition} and \ref{sec:consensus}.  The development in this section will not dwell on the details of constrained optimization, as it has been a widely studied area, e.g. \cite{Luenberger2008, Boyd2004}.  Rather, it will be focused on forming a control law to return the state to the constraint set when the constraints are relaxed. 

Without constraints, a gradient method for optimization of the problem would simply take the form $\dot{z} = -k_G \frac{\partial f}{\partial z}^T$, where $k_G \in \mathbb{R}^+$ is some gain.  However, when the optimization includes constraints, the update to the variables being optimized cannot be in any arbitrary direction.  The update can only occur in a direction that will allow the state to continue to satisfy the constraint.  As the constraints in (\ref{eq:constrained_opt}) are linear, this involves taking the gradient and projecting it onto the constraint space, as shown in Figure \ref{fig:Constrained_Opt}.  


\begin{figure*}[!t]
\begin{minipage}[b]{0.5\linewidth}
\begin{center}
\includegraphics[width=\linewidth]{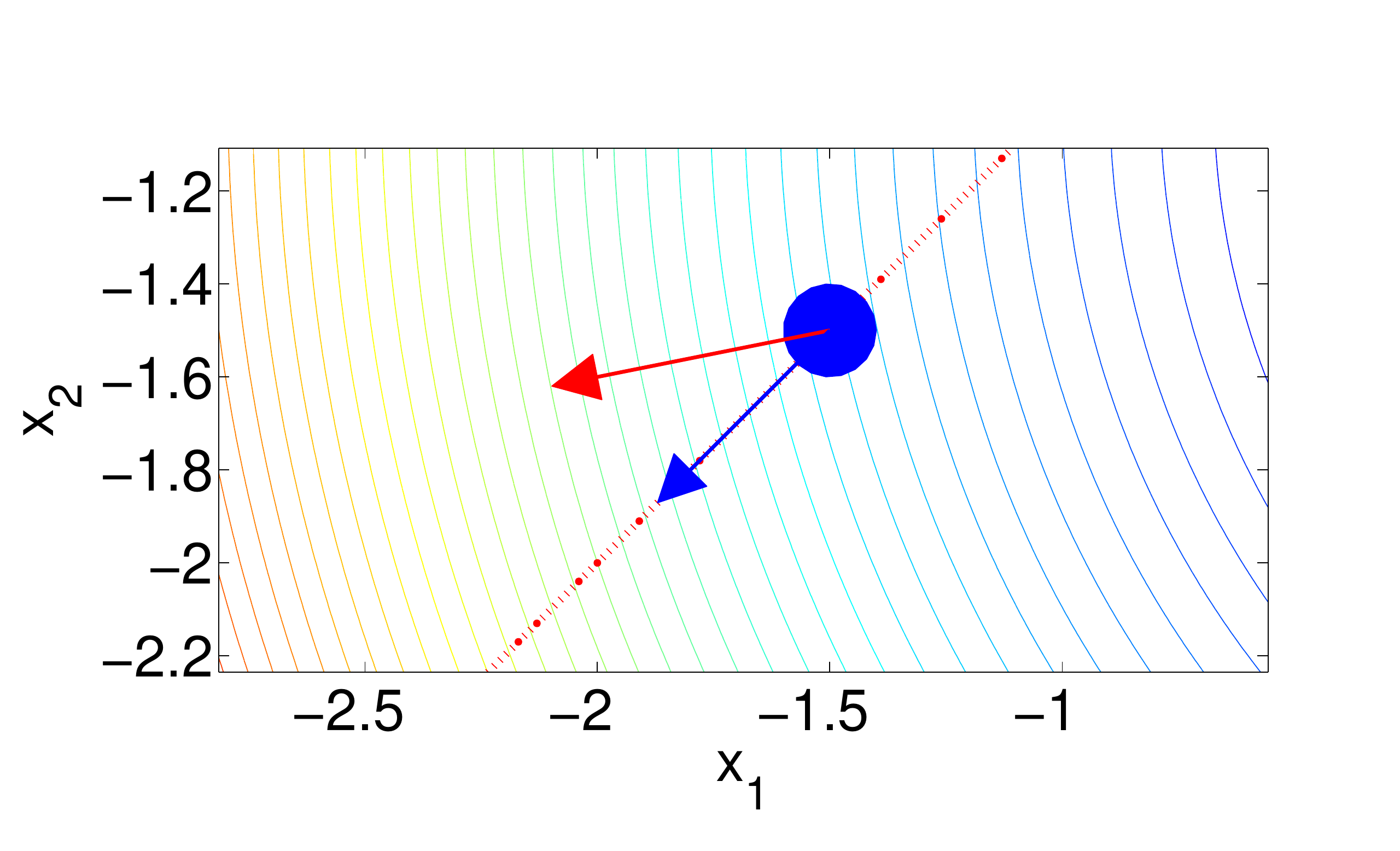} 
\end{center}
\end{minipage}
\begin{minipage}[b]{0.5\linewidth}
\begin{center}
\includegraphics[width=\linewidth]{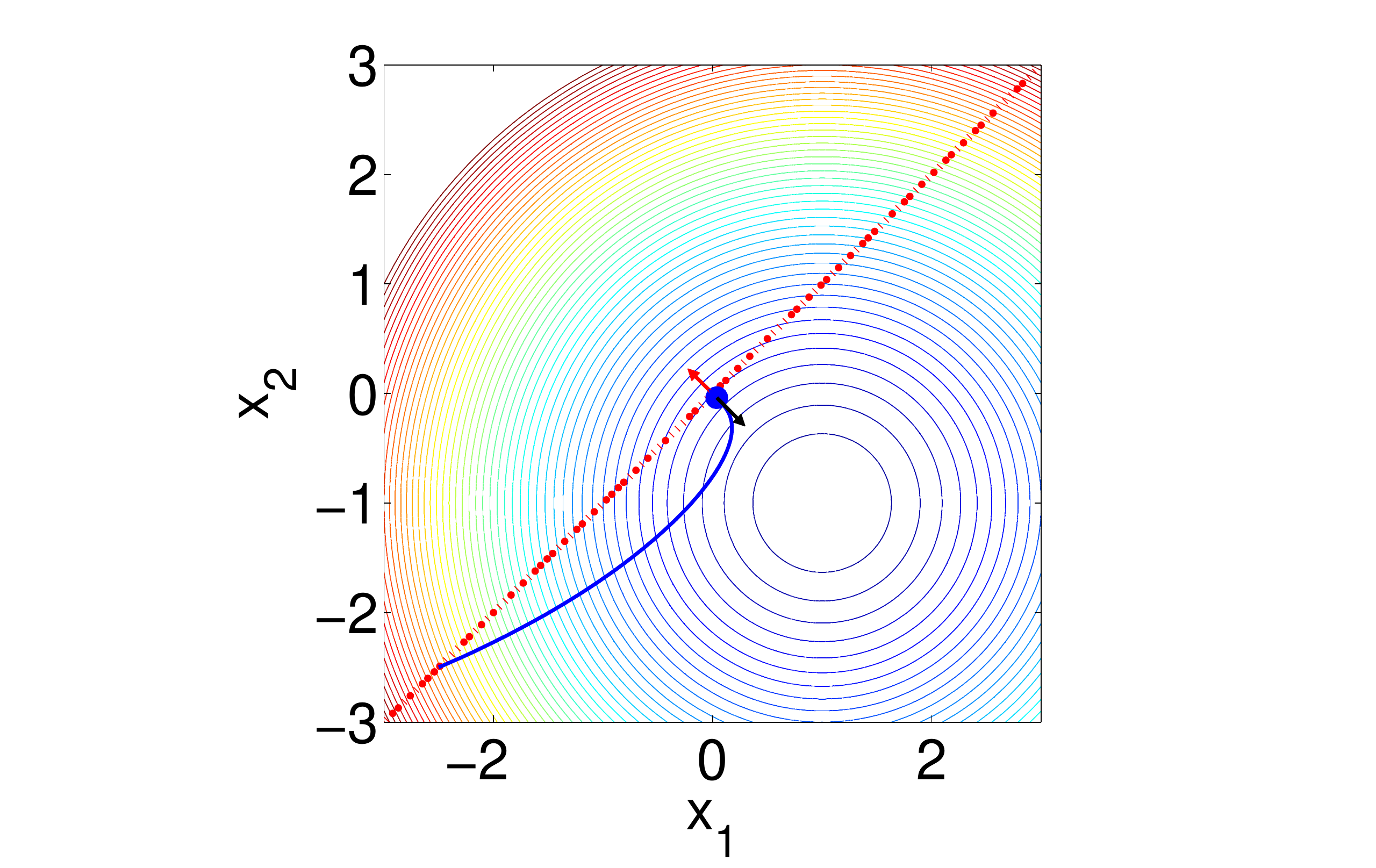} 
\end{center}
\end{minipage}
\caption{This figure shows the results of using the cost $f(z) = (x_1 - 1)^2 + (x_2 + 1)^2$.  Left: Dotted line shows the equality constraint and the arrows show the gradient and projected gradient.  Right: Result of performing the PI gradient method for optimization given in (\ref{eq:constrain_PI}).  The trajectory of the two states is shown ending in the final condition denoted by the solid circle and the constraint is shown as a dotted line.  The arrows show the final gradient and Lagrange multiplier multiplied by the constraint.  As expected, these are equal in magnitude, but opposite in direction. }%
\label{fig:Constrained_Opt}

\end{figure*}

It should be noted that the difference between an unconstrained gradient and a constrained gradient could be written in terms of the addition of a term perpendicular to the constraint set.  This could be expressed as $\frac{\partial f}{\partial z} + \lambda^T \frac{\partial h}{\partial z} = \frac{\partial f}{\partial z} + \lambda^T \D^T$.  The dynamics of the resulting optimization would then be
\begin{equation}
	\dot{z} = -k_G (\frac{\partial f}{\partial z}^T + \D \lambda(t)).
\end{equation} 
However, computing $\lambda(t)$ in a distributed fashion could be difficult as it may require knowledge from the entire network.  

Alternatively, if the gradient method is permitted to violate the constraint, control terms can be added to guide the state back to the constraint at the optimal point.  The first term we consider is a term proportional to the error from the constraint.  Allow $\lambda(t) = \frac{k_P}{k_G} e(t)$ where $e(t) = \D^T z(t)$ is the error at each edge of the graph.  This can be seen to be a logical choice because, as mentioned in Section \ref{ssec:networkedSystems}, $-\D e(t) = -\Lbig z(t)$ will point along a line orthogonal to the constraint set.  In other words, it points in the right direction, but with possibly the wrong magnitude.  This gives the dynamics
\begin{equation}
	\dot{z} = -k_G \frac{\partial f}{\partial z}(z) - k_P \D e(t) = -k_G \frac{\partial f}{\partial z}(z) - k_P \Lbig z(t).
	\label{eq:proportional_control}
\end{equation}
As will be discussed in Section \ref{sec:consensus}, the similarity of (\ref{eq:proportional_control}) to proportional control is perpetuated in that the steady-state solution will have a constant error from the desired optimal point.  Basically, the effort produced by introducing an error term proportional to the deviation from the constraint will fall short of the needed effort to drive the state all of the way to the constraint set.

To compensate for the steady state error, it is common to add an integral term to the control, e.g. \cite{GF2001}.  This would lead to a $\lambda(t)$ of the form $\lambda(t) = \frac{k_P}{k_G}e(t) + \frac{k_I}{k_G} \int_{t_0}^t e(s)ds$.  Over time, the integral term will build up the necessary effort to reach the constraint.  With this additional term, the dynamics of the system can be expressed as 
\begin{equation}
	\dot{z} = -k_G \frac{\partial f}{\partial z}^T - \D \Bigl( k_P e(t) + k_I \int_{t_0}^t e(s)ds \Bigr)
	        = -k_G \frac{\partial f}{\partial z}^T - k_P \Lbig ^T z - k_I \Lbig \int_{t_0}^t z(s)ds .
\label{eq:constrain_PI}	        
\end{equation}
It will be shown in Section \ref{sec:PIOptimization} that under Assumptions \ref{as:strictlyconvex}, \ref{as:solutionexistence}, and \ref{as:ConnectedGraph}, the dynamics in (\ref{eq:constrain_PI}) will indeed converge to a global minimum, as shown in Figure \ref{fig:Constrained_Opt}.  

While this method for obtaining a gradient strategy to arrive at the desired optimal value may seem somewhat trivial or ad-hoc, it will be seen in Section \ref{sec:dualDecomposition} that the dual-decomposition distributed optimization method will exactly correspond to adding an integral term.  Similarly, in Section \ref{sec:consensus}, it is shown that the consensus-based method will be exactly the proportional term.  Therefore, we combine the two methods in Section \ref{sec:PIOptimization} to form a PI distributed optimization method.

\subsection{PI performance metrics}
As the distributed control laws developed throughout the remainder of this paper are closely related to proportional and integral control laws, we give a brief introduction to the performance metrics that will be employed for comparison.  These metrics are important as there really is no single metric which best determines which control law is most suitable.  For example, as discussed in \cite{GF2001}, proportional control can converge quickly, but may result in a steady-state error.  As the proportional gain is increased, the steady-state error will typically decrease up to the point where the system becomes unstable.  On the other hand, integral control can be introduced to eliminate steady-state error, but dampening will be decreased and this will result in greater oscillation, overshoot, and slower convergence.

Therefore, to say that one method is ``better'' would required a reference to a specific application.  To be able to judge which method is more suitable for the given application, the following performance metrics, typical for classic control evaluation (e.g. \cite{GF2001}), are used:  
\begin{itemize}
\item Percent overshoot ($M_p$): The percentage of the distance that the state overshoots the final value, given as $\frac{x_{max} - x_f}{x_f - x_0} \times 100$.
\item Settling time ($t_{10}$ and $t_1$): Time it takes for the state to converge to within 10 percent and 1 percent of the final value.  For example, $t_{10}$ is the smallest $t$ such that $.9 \frac{x_f}{x_f - x_0} < x(t) < 1.1 \frac{x_f}{x_f - x_0}$ $\forall$ $t > t_{10}$.
\item Percent error (\% error): The percentage of error from the optimal value ($\frac{|x^* - x_f|}{x_f - x_0} \times 100$).
\end{itemize}
where $x_0$ is the initial value, $x_f$ is the final value, and $x_{max}$ is the maximum value reached.  For simplicity, we have assumed $x_{max} \geq x_f > x_0$.  As these values are measures of scalar states, the worst case over all agents will be presented in each evaluation.

Also note that numerical results depend upon the value of the gains and initial conditions.  To allow for a fair comparison between examples throughout the paper, all gains ($k_G, k_P,$ and $k_I$) are assigned a value of $1$.  Similarly, all initial conditions are assigned a value of $0$, unless otherwise stated.

\section{Dual decomposition}
\label{sec:dualDecomposition}
This section introduces the concept of gradient-based distributed optimization through the introduction of dual-decomposition, which has been used in a variety of different applications, e.g. \cite{Palomar2010, Wang2010, Terelius2011, Droge2013, Rantzer2009, Giselsson2010}.  Notation, examples, and proofs are given which will allow for a concise development of the distributed optimization methods in Sections \ref{sec:consensus} and \ref{sec:PIOptimization}.  

As already mentioned, dual-decomposition will be akin to integral control for constrained optimization.  However, to provide intuition as to the origins and the theoretical underpinnings of this method, it is presented here in a more typical fashion relying upon the theory of dual-optimization, e.g. \cite{Luenberger2008, Boyd2004}.  The formulation introduced here is closely related to that found in \cite{Terelius2011}, except that we use Uzawa's saddle point method, \cite{Arrow1958}, to update both the parameters and dual variables simultaneously.  This permits a continuous-time formulation where Lyapunov methods can be readily applied to establish convergence.   
After presenting the algorithm, the relation to integral control will be evaluated.  This section will end with a distributed implementation and a numerical example.

\subsection{Dual-decomposition for networked systems}
The basic idea behind dual decomposition is to introduce $n$ copies of the variables, with the constraint that the copies be equal.  The dual is then formed to relax the added constraints and a $\max \min$ optimization technique is used to solve the dual problem.  In this paper, we use a gradient method introduced in \cite{Arrow1958} for saddle point finding.  This will allow for a distributed solution to the problem where each agent uses only local information defined by the network graph, $\G$.

The dual problem to (\ref{eq:primalproblem_constrained}) can be formed by introducing a Lagrange multiplier vector, $\mu_k \in \mathbb{R}^n$, $k = 1, ..., M$, for each edge in $\G$.  It can be written as:
\begin{equation} \label{eq-dual-original}
 \max_{\mu_k, k=1,...,M} \min_{x_i, i = 1, ..., N} \left\{k_G \sum_{i=1}^N f_i(x_i) + k'_I\sum_{k=1}^M \mu_k^T (x_{k_1} - x_{k_2})\right\}
\end{equation}
where, again, the subscripts $k_1$ and $k_2$ correspond to the agents which make up the $k^{th}$ edge and $k_G,k'_I > 0$ are constant gains.  Note that due to the constraint equaling zero, $k'_I$ has no influence and $k_G$ scales the cost, but does not change the location of the optimal point.  Equation (\ref{eq-dual-original}) can be simplified by forming an aggregate Lagrange multiplier vector, $\mu \in \mathbb{R}^{Mn}$, in the same fashion that the aggregate state, $z$, was formed.  This allows us to reintroduce the constraint as $\D^T z = 0$ and rewrite (\ref{eq-dual-original}) as:
\begin{equation} \label{eq-dual-reform}
 \max_{\mu} \min_{z} F(z, \mu) =  k_G f(z) + k'_I z^T \D \mu.
\end{equation}

To solve this max-min problem, we use a technique first developed in \cite{Arrow1958} for saddle point finding and has more recently gained attention for its applicability to distributed optimization, e.g. \cite{Wang2010, Wang2011a, Rantzer2009, Rantzer2007}.  The basic idea behind this approach is that dynamics can be assigned to the variables being optimized and convergence can be established using control methods such as Lyapunov stability.  

For a saddle point finding problem, where $F(z,\mu)$ is strictly-convex in $z$ and strictly concave in $\mu$, \cite{Arrow1958} shows that applying the dynamics
\begin{align}
 \dot{z} = -\frac{\partial F}{\partial z}^T,  \qquad \dot{\mu} = \frac{\partial F}{\partial \mu}^T,
 \label{eq:Uzawa_alg}
\end{align}
the system will converge asymptotically to the saddle point.  Taking the partials of (\ref{eq-dual-reform}), the dynamics can be expressed as:
\begin{equation}
	\dot{z} = -k_G\frac{\partial f}{\partial z}^T - k_I' \D \mu 
\label{eq-dual_decomp_z_dynamics}
\end{equation}
\begin{equation}
	\dot{\mu} =  k_I' \D^T z.
\label{eq-dual_decomp_mu_dynamics}
\end{equation}
However, we note that (\ref{eq-dual-reform}) is not strictly concave in $\mu$, rather, it is linear.  This requires further evaluation, which is done in the proofs of Theorems \ref{th-dual_decomp} and \ref{th-dual_global_min}.  While there exist proofs for dual-decomposition, e.g. \cite{Palomar2010, Feijer2010}, we present an alternative proof here to show the relationship of dual-decomposition to the underlying constrained optimization problem.  This will allow us to easily extend these proofs in Section \ref{sec:PIOptimization} for the PI distributed optimization method that will be developed.  The proofs use the same Lyapunov candidate function as \cite{Rantzer2007, Feijer2010} to prove convergence, but differ in the application of Lasalle's invariance principal and the proof that the equilibrium reached is the global minimum.

\begin{theorem}  \label{th-dual_decomp}
Given Assumptions \ref{as:strictlyconvex}, \ref{as:solutionexistence}, and \ref{as:ConnectedGraph} as well as the dynamics in (\ref{eq-dual_decomp_z_dynamics}) and (\ref{eq-dual_decomp_mu_dynamics}), the saddle point $(\dot{z}, \dot{\mu}) = (0,0)$ is globally asymptotically stable.
\end{theorem}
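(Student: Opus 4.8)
The plan is to prove global asymptotic stability via a Lyapunov function together with LaSalle's invariance principle, using the quadratic candidate of \cite{Rantzer2007, Feijer2010}. First I would establish that an equilibrium $(z^*,\mu^*)$ of (\ref{eq-dual_decomp_z_dynamics})--(\ref{eq-dual_decomp_mu_dynamics}) exists. By Assumption \ref{as:solutionexistence} the primal problem has a finite minimizer $\bar{x}^*$, and setting $z^* = \mathbf{1}\otimes \bar{x}^*$ gives a consensus point with $\D^T z^* = 0$, so the $\mu$-dynamics vanish. The optimality condition $\sum_i \nabla f_i(\bar{x}^*)=0$ says precisely that $\frac{\partial f}{\partial z}(z^*)^T$ is orthogonal to the consensus subspace $\{\alpha\otimes\mathbf{1}\}$, hence lies in the range of $\D$; here Assumption \ref{as:ConnectedGraph} is what guarantees $\ker \D^T$ equals that subspace. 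Therefore some $\mu^*$ satisfies $k_G \frac{\partial f}{\partial z}(z^*)^T + k_I'\D\mu^* = 0$, making $(z^*,\mu^*)$ an equilibrium.

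Next I would take $V(z,\mu) = \tfrac12\|z-z^*\|^2 + \tfrac12\|\mu-\mu^*\|^2$, which is positive definite and radially unbounded about the equilibrium. Writing $g(z)=\frac{\partial f}{\partial z}^T$ and differentiating $V$ along (\ref{eq-dual_decomp_z_dynamics})--(\ref{eq-dual_decomp_mu_dynamics}), the cross terms cancel after substituting the two equilibrium relations $\D^T z^*=0$ and $k_G g(z^*) = -k_I'\D\mu^*$, leaving
\[
\dot V = -k_G\,(z-z^*)^T\big(g(z)-g(z^*)\big).
\]
Since each $f_i$ is convex (Assumption \ref{as:strictlyconvex}), $g=\nabla f$ is a monotone map, so this inner product is nonnegative and $\dot V \le 0$. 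This gives Lyapunov stability and, with radial unboundedness, boundedness of all trajectories.

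For asymptotic stability I would invoke LaSalle and study the largest invariant set $M\subseteq\{\dot V=0\}$. On $\{\dot V=0\}$ the monotonicity gap vanishes agent-by-agent, $(x_i-x_i^*)^T(\nabla f_i(x_i)-\nabla f_i(x_i^*))=0$. Using twice-differentiability (Assumption \ref{as:strictlyconvex}) and the fact that a positive-semidefinite Hessian annihilates every direction in which its quadratic form vanishes, one shows $\nabla f_i$ is constant along the segment from $x_i^*$ to $x_i$, so $g(z)=g(z^*)$ throughout $M$. Substituting this reduces the dynamics on $M$ to the linear conservative system $\dot{\tilde z} = -k_I'\D\tilde\mu$, $\dot{\tilde\mu} = k_I'\D^T\tilde z$, where $\tilde z=z-z^*$ and $\tilde\mu=\mu-\mu^*$.

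The main obstacle is this last step: because $f(z)=\sum_i f_i(x_i)$ is only convex (strictness is assumed only for the aggregate cost over a common variable), the reduced system is a marginally stable oscillator, and $\dot V=0$ alone does not force $z=z^*$. I would rule out persistent motion as follows. Invariance requires $g(z(t))\equiv g(z^*)$, hence $\nabla^2 f(z)\,\dot z=0$, confining $z$ to the level set on which the gradient is fixed; combined with the reduced dynamics (equivalently $\ddot{\tilde z}=-(k_I')^2\Lbig\tilde z$) and the boundedness of $\mu$, the consensus-drift and oscillatory modes cannot keep the gradient constant unless $\D^T z=0$. Finally, on the consensus set with $g(z)=g(z^*)$, strict convexity of $\sum_i f_i$ together with connectedness (Assumption \ref{as:ConnectedGraph}) pins $z$ to $z^*$, while $\mu$ converges to the affine set of equilibrium multipliers, yielding the asserted global asymptotic stability of the saddle point. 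The identification of this equilibrium with the global minimizer is then deferred to Theorem \ref{th-dual_global_min}.
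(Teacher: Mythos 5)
Your proposal takes a genuinely different route from the paper. The paper works in velocity coordinates: it uses $V = \tfrac{1}{2}(\dot{z}^T\dot{z} + \dot{\mu}^T\dot{\mu})$, differentiates the dynamics once more to get $\dot{V} = -k_G\,\dot{z}^T \tfrac{\partial^2 f}{\partial z^2}\dot{z}$, and then applies LaSalle on the set $\{\dot{z}=0\}$. You instead construct an equilibrium $(z^*,\mu^*)$ (this step is correct, and the paper never does it explicitly), use the distance-to-equilibrium function $V = \tfrac12\|z-z^*\|^2 + \tfrac12\|\mu-\mu^*\|^2$, and rely on monotonicity of $\nabla f$ rather than on the Hessian. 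Your computation $\dot{V} = -k_G(z-z^*)^T(g(z)-g(z^*))$, the agent-wise collapse of the monotonicity gap, the segment/Hessian argument giving $g(z)=g(z^*)$ on $\{\dot{V}=0\}$, and the reduced conservative dynamics $\dot{\tilde{z}} = -k_I'\D\tilde{\mu}$, $\dot{\tilde{\mu}} = k_I'\D^T\tilde{z}$ are all correct. You have also correctly identified the crux the paper glosses over: $f(z)=\sum_i f_i(x_i)$ is only convex in $z$ under Assumption \ref{as:strictlyconvex}, so $\{\dot{V}=0\}$ does not collapse to the equilibrium; the paper hides this by asserting $H(z)=k_G\tfrac{\partial^2 f}{\partial z^2}\succ 0$ ``due to strict convexity,'' which does not follow from Assumption \ref{as:strictlyconvex}.

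However, your final step --- ``the consensus-drift and oscillatory modes cannot keep the gradient constant unless $\D^T z = 0$'' --- is asserted rather than proved, and the oscillatory half of that assertion is in fact \emph{false} under Assumption \ref{as:strictlyconvex} as stated, so no argument can close the gap without strengthening the hypothesis. Counterexample: $n=1$, line graph $1$--$2$--$3$, $f_1 \equiv 0$, $f_3 \equiv 0$, $f_2(x) = x^2$ (each $f_i$ convex, the sum strictly convex, Assumptions \ref{as:solutionexistence} and \ref{as:ConnectedGraph} hold). Here $z^* = 0$, $\mu^* = 0$, and $\tfrac{\partial f}{\partial z}^T = (0,\,2x_2,\,0)^T$ vanishes whenever $x_2 = 0$. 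Since $(1,0,-1)^T$ is an eigenvector of $L$ with eigenvalue $1$, one checks directly that $x_1(t) = \cos(k_I' t)$, $x_2(t)\equiv 0$, $x_3(t) = -\cos(k_I' t)$, $\mu_1(t)=\mu_2(t) = -\sin(k_I' t)$ (edges oriented $1\!\to\!2$, $2\!\to\!3$; note $\mu(0)=0$) is an exact periodic orbit of (\ref{eq-dual_decomp_z_dynamics})--(\ref{eq-dual_decomp_mu_dynamics}): the gradient is identically zero along it, so the nonlinear system coincides there with the marginally stable oscillator, and neither $(z,\mu)$ nor $(\dot{z},\dot{\mu})$ ever approaches the equilibrium set. (Your consensus-drift argument, by contrast, can be made rigorous exactly as you sketch, using strict convexity of $\sum_i f_i$.) The theorem --- and hence both your proof and the paper's --- only holds under a stronger assumption, e.g.\ each $f_i$ strictly convex with $\tfrac{\partial^2 f}{\partial z^2}\succ 0$, which is what the paper's proof silently invokes. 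Under that hypothesis your argument closes immediately and without the oscillation analysis: each $\nabla f_i$ is injective, so $g(z)=g(z^*)$ on $\{\dot{V}=0\}$ pins $z=z^*$, and invariance then forces $\D\tilde{\mu}=0$, giving convergence to the set of equilibria (a set, not the point $(z^*,\mu^*)$, when the graph has cycles --- consistent with the paper stating the result in terms of $(\dot{z},\dot{\mu})=(0,0)$).
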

\begin{proof}
Using the candidate Lyapunov function $V = \frac{1}{2}(\dot{z}^T \dot{z} + \dot{\mu}^T \dot{\mu} )$, $\dot{V}$ can be written as:
\begin{equation}
\begin{split}
\dot{V} &= \dot{z}^T \ddot{z} + \dot{\mu}^T \ddot{\mu} = -\dot{z}^T \bigr(k_G \frac{\partial^2 f}{\partial z^2} \dot{z} + k_I'\D \dot{\mu} \bigl) + k_I'\dot{\mu}^T \D^T \dot{z} \\
&=-k_G\dot{z}^T \frac{\partial^2 f}{\partial z^2} \dot{z} = -\dot{z}^T H(z) \dot{z} \leq 0 \mbox{ } \forall \dot{z}, \dot{\mu}
\end{split}
\end{equation}
where $H(z) = k_G \frac{\partial^2 f}{\partial z^2} \succ 0$ due to strict convexity given by Assumption \ref{as:strictlyconvex}.
As there is no dependence upon $\dot{\mu}$ in $\dot{V}$, LaSalle's invariance principle must be used to show convergence to $(\dot{z}, \dot{\mu}) = (0,0)$.

Let the set where $\dot{V} = 0$ be denoted as
\begin{equation}
S = \{ (\dot{z}, \dot{\mu}) | \dot{V} = 0 \} = \{ (\dot{z} = 0, \dot{\mu} \in \mathbb{R}^{Mn}) \}
\end{equation}
To see that that the only solution in which the complete state $(\dot{z}, \dot{\mu})$ can remain in $S$ is the equilibrium $(0,0)$, use the fact that to stay in $S$ $\Rightarrow \dot{z} = 0$ $\forall t$ $\Rightarrow \ddot{z} = 0$.  From this we see that
$$
\ddot{z} = -H(z) \dot{z} - k_I'\D \dot{\mu} = -k_I \D \D^T z = -k_I\Lbig z = 0,
$$
where $k_I'^2 = k_I$.  For the connected graph, the only $z$ such that $-\Lbig z = 0$ is $z =  \alpha \otimes \mathbf{1}$, $\alpha \in \mathbb{R}^n$.  This shows two things:
\begin{enumerate}
\item [(1)] $x_{i} = x_{j}$ $\forall i, j$ which means that the agents reach consensus.
\item [(2)] $\dot{\mu} = k_I' \D^T(\alpha \otimes \mathbf{1}) = 0$ which shows that the only possible value for $\dot{\mu}$ which stays in $S$ is $\dot{\mu} = 0$.
\end{enumerate} 

Since $V$ is radially unbounded, this completes the proof.
\end{proof}

\begin{theorem} \label{th-dual_global_min}
Given Assumptions \ref{as:strictlyconvex}, \ref{as:solutionexistence}, and \ref{as:ConnectedGraph} as well as the dynamics in (\ref{eq-dual_decomp_z_dynamics}) and (\ref{eq-dual_decomp_mu_dynamics}), the saddle point $(\dot{z}, \dot{\mu}) = (0,0)$ corresponds to the global minimum. 
\end{theorem}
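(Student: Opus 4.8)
The plan is to combine the convergence result of Theorem \ref{th-dual_decomp} with the first-order optimality conditions of the original problem. Theorem \ref{th-dual_decomp} guarantees that the dynamics converge to the set where $\dot{z}=0$ and $\dot{\mu}=0$, so it suffices to show that any equilibrium $(z^*, \mu^*)$ of (\ref{eq-dual_decomp_z_dynamics})--(\ref{eq-dual_decomp_mu_dynamics}) yields a $z^*$ that globally minimizes $f$ subject to $\D^T z = 0$, which by the equivalence established in Section \ref{ssec:networkedSystems} is the same as minimizing (\ref{eq-primalproblem_orig}). First I would read off the two equilibrium conditions: setting $\dot{\mu}=0$ in (\ref{eq-dual_decomp_mu_dynamics}) gives $\D^T z^* = 0$ (primal feasibility, i.e. consensus), and setting $\dot{z}=0$ in (\ref{eq-dual_decomp_z_dynamics}) gives the stationarity relation $k_G \frac{\partial f}{\partial z}^T + k_I' \D \mu^* = 0$.

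Next I would eliminate the multiplier $\mu^*$. Since each column of the incidence matrix $D$ carries a single $+1$ and a single $-1$, we have $\mathbf{1}^T D = 0$, and hence $(\beta \otimes \mathbf{1})^T \D = \beta^T \otimes (\mathbf{1}^T D) = 0$ for every $\beta \in \R^n$. Left-multiplying the stationarity relation by $(\beta \otimes \mathbf{1})^T$ therefore annihilates the $\D \mu^*$ term and leaves $(\beta \otimes \mathbf{1})^T \frac{\partial f}{\partial z}^T = 0$ for all $\beta$. Unwinding the aggregate ordering of $z = [z_1^T, \dots, z_n^T]^T$, and using that $\D^T z^* = 0$ forces $x_i = x^*$ for all $i$ (here connectedness, Assumption \ref{as:ConnectedGraph}, is essential), this collapses to $\beta^T \sum_{i=1}^N \nabla f_i(x^*) = 0$ for all $\beta$, hence $\sum_{i=1}^N \nabla f_i(x^*) = 0$. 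This is exactly the first-order stationarity condition for the unconstrained problem (\ref{eq-primalproblem_orig}), and since $\sum_{i=1}^N f_i$ is strictly convex (Assumption \ref{as:strictlyconvex}) with a finite minimizer (Assumption \ref{as:solutionexistence}), $x^*$ is its unique global minimizer.

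I expect the main obstacle to be the bookkeeping in the elimination step: one must carefully track the block ordering of $z$ against the partials $\frac{\partial f}{\partial z}$ so that the projection $(\beta \otimes \mathbf{1})^T \frac{\partial f}{\partial z}^T$ really does regroup the agent gradients into $\sum_{i=1}^N \nabla f_i(x^*)$. Equivalently, and perhaps more cleanly, one could argue via convex duality: the equilibrium conditions are precisely the KKT conditions for the convex program (\ref{eq:constrained_opt}) (convex objective by Assumption \ref{as:strictlyconvex}, affine constraint $h(z) = \D^T z$), with $\lambda = (k_I'/k_G)\mu^*$ playing the role of the multiplier, so sufficiency of KKT for convex problems with affine constraints gives global optimality at once. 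I would favor the null-space projection argument, however, since it keeps the proof self-contained and makes the connection back to the summed cost (\ref{eq-primalproblem_orig}) explicit, matching the paper's stated aim of relating these methods to the underlying constrained problem.
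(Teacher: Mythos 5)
Your proposal is correct, and it takes a genuinely different route from the paper. The paper casts the equilibrium as a candidate extremum of the constrained program via the Lagrange conditions from \cite{Luenberger2008}: feasibility and stationarity follow just as in your first step, but the bulk of the paper's proof is then spent establishing that the equilibrium is a \emph{regular point} of the constraints. Since the incidence matrix of a graph with cycles has linearly dependent columns ($M > N-1$), the rows of $\D^T$ are not independent, so the paper passes to a spanning tree $\G_T$, shows $\D_T^T z = 0$ is an independent representation of the same constraint set, and explicitly reconstructs a multiplier $\lambda_T = \lambda' + \Delta \lambda''$ for the tree from the one for $\G$. You sidestep the regularity question entirely: rather than verifying a constraint qualification, you eliminate the multiplier by projecting the stationarity relation onto the left null space of $\D$ (using $\mathbf{1}^T D = 0$ and the mixed-product property $(\beta \otimes \mathbf{1})^T (I_n \otimes D) = \beta^T \otimes (\mathbf{1}^T D) = 0$), which together with consensus collapses the equilibrium conditions to $\sum_{i=1}^N \nabla f_i(x^*) = 0$, i.e.\ unconstrained stationarity of the strictly convex summed cost. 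What each approach buys: yours is shorter, self-contained, and makes the link to the original problem (\ref{eq-primalproblem_orig}) explicit, with the only delicate step being the Kronecker bookkeeping you correctly identify; the paper's spanning-tree argument is heavier but exhibits an honest Lagrange multiplier for an independent constraint representation, a structural fact about the network that the paper reuses in spirit when extending these results (Theorem \ref{th-PI_global-min} and Section \ref{sec:scalability}). Your alternative KKT remark is also sound --- for a convex objective with affine constraints no constraint qualification is needed for sufficiency --- and it exposes that the paper's regularity step, while correct, is not strictly necessary for the conclusion.
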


\begin{proof}
To validate that a feasible solution is a local extremum, $z^*$, of a constrained optimization problem {\it it is sufficient to show that $z^*$ corresponds to a regular point (i.e. rows of $\frac{\partial h}{\partial z}(z^*)$ are linearly independent) and there exists $\lambda^*$ such that  
\begin{equation}
0 = \frac{\partial f}{\partial z}(z^*) + {\lambda^*}^T \frac{\partial h}{\partial z}
\label{eq:equality_condition}
\end{equation}
where $h(z) = 0$ is the constraint and $f(z)$ is the cost}, (see \cite{Luenberger2008} for a discussion on local extremum and regular points).  Due to Assumption \ref{as:strictlyconvex}, the only extremum is the global minimum.  Therefore, this proof is performed in two steps.  First, we show that the saddle point corresponds to a feasible point satisfying (\ref{eq:equality_condition}), then we show that the saddle point is indeed a regular point.  

The proof of Theorem \ref{th-dual_decomp} showed that $(\dot{z}, \dot{\mu}) = (0,0)$ implies that consensus is reached.  Thus, the constraints are satisfied and the saddle point is feasible.  Also, by noting that $\frac{\partial h}{\partial z} = \D^T$ for the problem at hand, (\ref{eq-dual_decomp_z_dynamics}) gives us
\begin{equation}
\dot{z} = 0 = -k_G \frac{\partial f}{\partial z}^T - k_I' \frac{\partial h}{\partial z}^T \mu \Rightarrow 0 = k_G \frac{\partial f}{\partial z} + k_I' \mu^T \frac{\partial h}{\partial z} .
\label{eq:minCond_from_zdot}
\end{equation}  
Allowing $\lambda = \frac{k_I'}{k_G} \mu $, (\ref{eq:equality_condition}) is satisfied.  

The saddle point must now be shown to be a regular point.  To do so, we show that the convergent point is a regular point to the problem in which edges are removed from $\G$ to form a minimum spanning tree (for undirected graphs, a minimum spanning tree is a connected graph with $N$ nodes and $N-1$ edges, e.g. \cite{Mesbahi2010}).  Due to Assumption \ref{as:ConnectedGraph}, a minimum spanning tree, $\G_T$, exists such that $\E_T \subset \E$.  The saddle point is shown to be regular by first showing that the representation of the constraints using $\G_T$, i.e. $\D^T_T z = 0$, is linearly independent and then showing that if a $\lambda$ can be found to satisfy (\ref{eq:equality_condition}) for $\G$, a $\lambda_T$ can be found to satisfy (\ref{eq:equality_condition}) for $\G_T$.  

Let $D_T \in \R^{N \times N-1}$ be the incidence matrix associated with $\G_T$.  The graph Laplacian for a connected graph with $N$ nodes always has rank $N-1$, \cite{Mesbahi2010}.  Therefore, $\D_T$ has full rank, which for $n = 1$, gives that $D_T^T z = 0$ is a linearly independent set of constraints.  For $n > 1$, $\D_T = I_n \otimes D_T$, and, as noted in Section \ref{ssec:networkedSystems}, $\D_T^T z = \begin{bmatrix}D_T^T z_1 \\ \vdots \\ D_T^T z_n \end{bmatrix}$ which will also be linearly independent.

Without loss of generality, we can assume that $D = \begin{bmatrix} D_T & D_R \end{bmatrix}$ where $D_R$ containts the ``redundant'' edges not contained in $\G_T$.  Since $D_T$ has the same rank as $D$, the columns in $D_R$ can be expressed as linear combinations of the columns of $D_T$.  In other words, $D_R = D_T \delta$, where $\delta \in \R^{N-1 \times M-N+1}$.

Without loss of generality, assume the elements in $z$ have been rearranged to write $D = \begin{bmatrix} D_T & D_R \end{bmatrix}$, where $\D_R = I_n \otimes D_R$.  Since $D_R = D_T \delta$, $\D_R$ can be expressed as $\D_T \Delta$, where $\Delta = \mathbf{1} \otimes \delta$.  We can separate $\lambda$ as $\lambda = \begin{bmatrix} \lambda' \\ \lambda''\end{bmatrix}$ which allows us to write $\D \lambda = \D_T \lambda' + \D_R \lambda'' = \D_T \lambda' + \D_T \Delta \lambda''$.  Therefore, if a $\lambda$ is found such that (\ref{eq:equality_condition}) is satisfied for $\G$, $\lambda_T$ can be defined as $\lambda_T = \lambda' + \Delta \lambda''$.  Thus, the solution is a regular point for the constraint $\D_T^T z = 0$.
\end{proof}


\subsection{Integral control}
With the optimization framework in hand, the loop can be closed on the discussion begun in Section \ref{ssec:constrained_opt} by relating the dynamics in (\ref{eq-dual_decomp_z_dynamics}) and (\ref{eq-dual_decomp_mu_dynamics}) to integral control.  Note that the Lagrange multiplier, $\mu$, can be expressed as follows (assuming $\mu(t_0) = 0$):
\begin{equation}
\mu(t) = \int_{t_0}^t \dot{\mu}(\tau) d \tau = \int_{t_0}^t k_I' \D^T z(\tau)d\tau =  k_I' \D^T \int_{t_0}^t z(\tau)d\tau.
\label{eq:mu_integral_control}
\end{equation}
This allows $\dot{z}$ to be expressed as:
\begin{equation}
\dot{z}(t) = -k_G \frac{\partial f}{\partial z}^T - k_I \D \D^T \int_{t_0}^t z(\tau)d\tau  = -k_G \frac{\partial f}{\partial z}^T - k_I \Lbig \int_{t_0}^t z(\tau)d\tau,
\end{equation}
which gives the same result obtained in (\ref{eq:constrain_PI}) assuming $k_P = 0$.  After closer inspection of (\ref{eq-dual_decomp_mu_dynamics}), one can see that the Lagrange multiplier, $\mu$ is indeed the integral of the weighted error referred to in Section \ref{ssec:constrained_opt}.

\subsection{Distributed implementation}
While the analysis of this method has been performed from the point of view of the entire system, its utility as a distributed optimization technique would be questionable if it were not possible for the algorithm to be executed by each agent using only local information.  Therefore, we now present the algorithm in terms of implementation of a single agent and discuss the information and communication requirements.

Equations (\ref{eq-dual_decomp_z_dynamics}) and (\ref{eq-dual_decomp_mu_dynamics}) can be written in terms of execution by a single agent, $i$, as follows:
\begin{equation}
\dot{x}_i = -k_G \frac{\partial f_i}{\partial x}^T(x_i) - k_I'\sum_{j \in \N_i} \mu_i^j,
\label{eq:x_singleAgent_I}
\end{equation}
\begin{equation}
\dot{\mu}_i^j = k_I'(x_{i} - x_{j}),
\label{eq:mu_singleAgent_I}
\end{equation}
where for simplification we have introduced the Lagrange multiplier variables $\mu_i^j = -\mu_j^i = d_{i,k_{ij}} \mu_{k_{ij}}$ where $k_{ij}$ is the edge connecting agents $i$ and $j$ and it is assumed that $\mu_k(0) = 0,$ $k = 1, ..., M$.\footnote{By uniqueness of solutions to differential equations, $\dot{\mu}_i^j(t) = -\dot{\mu}_j^i(t)$ $\forall t$}  Note that $\N_i$ denotes agent $i$'s neighborhood set, or agents with which agent $i$ can communicate.  By inspection, agent $i$ can compute $\dot{x}_i$ and $\dot{\mu}_i^j$ $\forall$ $j \in \N_i$ using only its own state and the states of its neighbors.  Therefore, we emphasize that {\it the only piece of information that an agent needs to communicate with its neighbors is its version of the state vector}.

\begin{figure*}[!t]
\begin{center}
\includegraphics[width=0.3\linewidth]{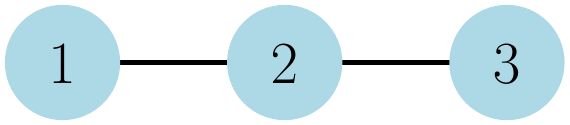} 
\end{center}
\caption{This figure depicts the ``Line'' network structure used for the examples in Sections \ref{sec:dualDecomposition}, \ref{sec:consensus}, and \ref{sec:PIOptimization} }%
\label{fig:Line}
\end{figure*}

\subsection{Example}
To illustrate behaviors typical of dual decomposition, we give a numerical example. Let the individual costs be defined as follows:
\begin{equation}
\begin{split}
f_1(x_1) = (x_{11} - 1)^2 + \frac{1}{3} (x_{11} - x_{12})^2, \\
f_2(x_2) = (x_{22} - 3)^2 + \frac{1}{3} (x_{21} - x_{22})^2, \\
f_3(x_3) = (x_{31} - 6)^2 + \frac{1}{3} (x_{31} - x_{32})^2.
\end{split}
\label{eq:convex_example}
\end{equation}
where $x_i = \begin{bmatrix}x_{i1} & x_{i2} \end{bmatrix}^T$ and the network structure takes the form of the line graph shown in Figure \ref{fig:Line}.  In other words, agents 1 and 2 as well as 2 and 3 can communicate, but agents 1 and 3 cannot.  The global cost is given by $\sum_{i = 1}^3 f_i(x_i)$, where $x_1 = x_2 = x_3$, has the optimal solution of $x^* = \begin{bmatrix}3.4 & 3.2 \end{bmatrix}^T$.  

Figure \ref{fig:convex_I} and Table \ref{tbl:convex} show the results of employing these dynamics.  As seen in Figure \ref{fig:convex_I}, there is oscillation in the solution as the different agents communicate and vary their values.  This oscillation is quite typical of dual-decomposition \cite{Rantzer2007}, and it will be seen that the oscillation increases with an increase in problem complexity and number of agents in Section \ref{sec:scalability}.
Table \ref{tbl:convex} shows that the I control (corresponding to dual-decomposition) has a large overshooot and slower settling times when compared with the P and PI control laws (which are discussed in Sections \ref{sec:consensus} and \ref{sec:PIOptimization}).  This is to be expected as the integral term will decrease the dampening of the system \cite{GF2001}.  Moreover, as expected, Table \ref{tbl:convex} shows that there is zero steady-state error when using dual decomposition.

\begin{table}[t]
\begin{center}
\caption{The results of performing proportional, integral, and PI distributed optimization for the convex optimization problem } 
\label{tbl:convex}
\begin{tabular}{|c|c|c|c|c|}
\hline 
& P: $\gamma = 1$ & P: $\gamma = \frac{1}{1 + .1t}$ & I & PI \\ 
\hline 
$M$ & 0.11\% & 34.66\% & 24.24\% & 14.95\% \\ 
\hline 
$t_{10}$ & 3.54 & 103.73 & 5.61 & 5.14 \\ 
\hline 
$t_1$ & 6.66 & 869.32 & 15.04 & 13.19 \\ 
\hline 
\% error & 43.58\% & 1.97\% & 0\% & 0\% \\ 
\hline 
\end{tabular}
\end{center} 
\end{table}
\begin{figure}%
\centering
\includegraphics[width=.5\linewidth]{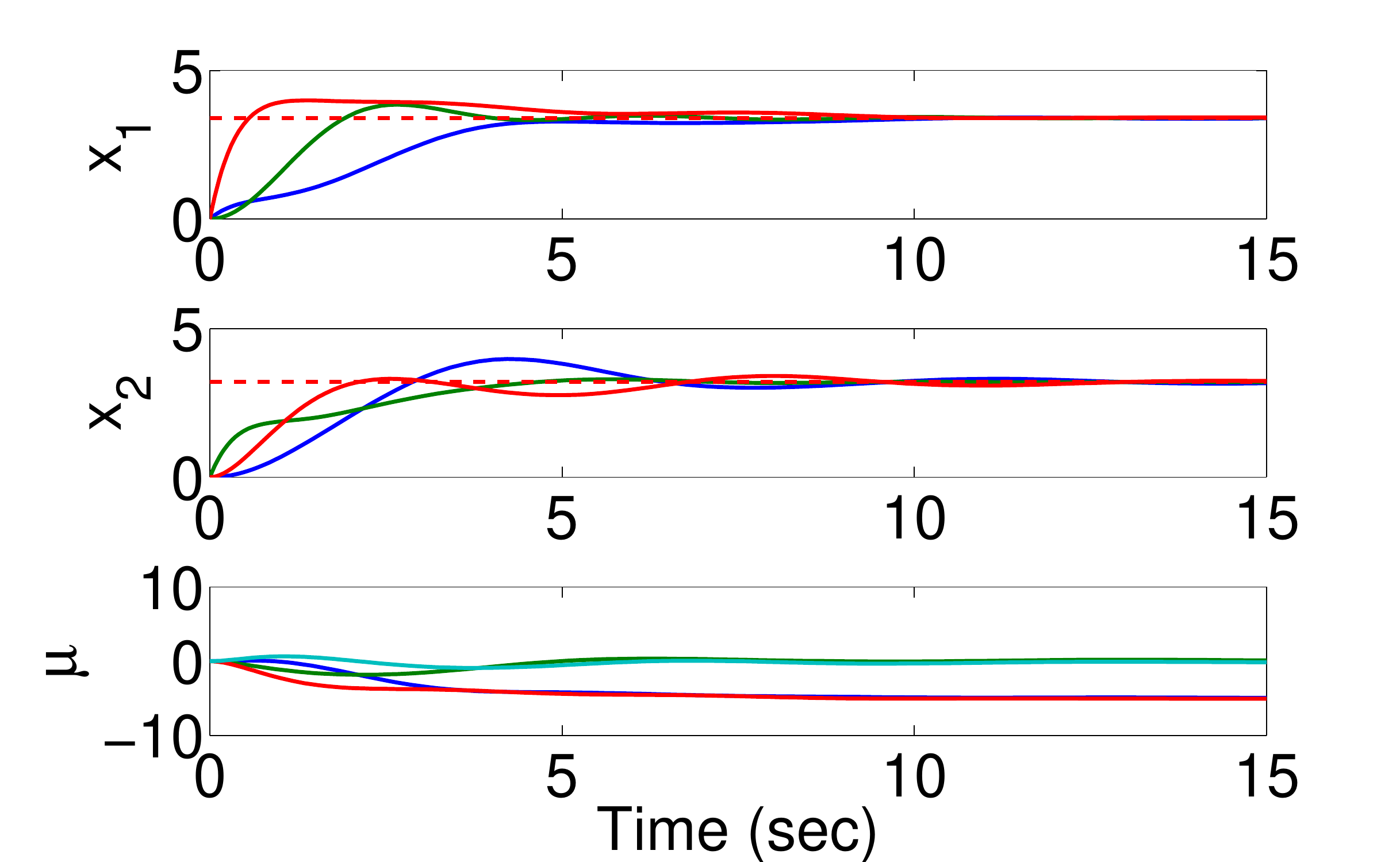} 
\caption{This figure shows the results from the convex optimization example using dual-decomposition}%
\label{fig:convex_I}
\end{figure}

\section{Consensus based distributed optimization}
\label{sec:consensus}
This section introduces the consensus-based distributed optimization technique, first outlined in  \cite{Nedic2009}, which will give the proportional component in the new PI distributed optimization method.  After formulating the algorithm in terms of notation presented in previous sections, characteristics of the convergence are discussed in terms of the constrained optimization problem. This section will end by resuming the example started in the previous section to present a comparison between the distributed optimization methods.

\subsection{Consensus based algorithm}

While originally given in discrete time, we present the consensus based distributed optimization problem in continuous time as done in \cite{Kvaternik2012} to maintain consistent notation.  In stark contrast to the development of dual-decomposition, the consensus-based method was not designed from existing optimization methods.  Rather, it was directly developed for networked, multi-agent systems.  The foundation of this concept is that the consensus equation, a core equation in many multi-agent designs, e.g. \cite{Mesbahi2010, Olfati-Saber2004, Jadbabaie2003}, can be used to force agreement between different agents.  Therefore, the basic idea is for each agent to combine a step in the gradient direction with a step in the direction of consensus.

As the consensus method was developed for the multi-agent scenario, it can immediately be expressed in a distributed fashion as   
\begin{equation}
\dot{x}_i = - k_G \frac{\partial f_i}{\partial x}(x_i) - \sum_{j \in \N_i} \alpha_{ij}(x_i - x_j) ,
\label{eq:consensus}
\end{equation}
where $\alpha_{ij}$ is the weighting that agent $i$ associates with the edge of the graph connecting itself to agent $j$.  Assuming equal weighting on all edges, i.e. $\alpha_{ij} = k_P$ $\forall$ $(v_i, v_j) \in \E$, the consensus based method can be stated for the aggregate state dynamics as:
\begin{equation}
\dot{z} = - k_P \Lbig z - k_G \frac{\partial f}{\partial z}^T.
\label{eq:consensus_based}
\end{equation}
From this expression of the aggregate dynamics, we immediately see that the consensus term is the proportional term given in (\ref{eq:constrain_PI}).  

We do not present a proof of this method as it does not add to the development in this paper.  For the discrete-time analog to (\ref{eq:consensus_based}), using a diminishing or adaptive step-size rule\footnote{Section \ref{sec:consensus} is the only section which consideres the gain $k_G$ to be time-varying.  Throughout the rest of the paper, all gains ($k_G, k_I, $ and $k_P$) are considered constant.} for determining $k_G$ at each iteration of the optimization would cause the agents to converge to the optimal value.  For the continuous case, \cite{Kvaternik2012} proves that agents can come arbitrarily close to the optimum by choosing $\frac{k_G}{k_P}$ to be ``sufficiently small.''

%

The diminishing step-size condition has been observed to be a possible deterrent of quick convergence of the algorithm, e.g. \cite{Wang2010, Nedic2009, Wang2011a}.  To balance a tradeoff between convergence and optimality, \cite{Nedic2009} proposed a scheme of changing $k_G$ during execution to get closer to the optimal point.  The basic idea is that a constant gain often will result in the state approaching a steady-state value in relatively few steps.  Once the state is ``close enough'' to the steady-state then the gain is changed to zero to allow the agents to reach consensus.  They prove that the longer the agents wait to switch to the zero gain, the closer they will come to the optimal value, but will suffer in convergence rate.  

\subsection{Consensus method and constrained optimization}

We now examine this tradeoff further in terms of the underlying constrained optimization problem given in Section \ref{ssec:constrained_opt}.  This will give insight into the effect of the contribution of the proportional term and the benefit of including an integral term, which is done in Section \ref{sec:PIOptimization}.  

To perform this analysis, assume that $\bar{z}$ is the steady-state result of executing (\ref{eq:consensus_based}) as $t \longrightarrow \infty$.  Such a $\bar{z}$ is known to exist due to the analysis in \cite{Kvaternik2012}.  At $\bar{z}$, (\ref{eq:consensus_based}) will give
\begin{equation}
\dot{z} = 0 = - k_P \Lbig \bar{z} - k_G \frac{\partial f}{\partial z}^T(\bar{z}) \Rightarrow 0 = \frac{k_P}{k_G} \Lbig \bar{z} + \frac{\partial f}{\partial z}^T(\bar{z}).
\label{eq:convergent_dynamics}
\end{equation}

Using the fact that $\Lbig = \Lbig^T = \D \D^T$, (\ref{eq:convergent_dynamics}) can be expressed as $0 =  \frac{\partial f}{\partial z} + \frac{k_P}{k_G} \bar{z}^T \D \D^T$.  Now, let $\lambda^T = \frac{k_P}{k_G} \bar{z}^T \D$ and recall that $\frac{\partial h}{\partial z} = \D^T$, where $h(z) = 0$ is the equality constraint.  This gives $0 = \frac{\partial f}{\partial z} + \lambda^T \frac{\partial h}{\partial z}$ as in (\ref{eq:equality_condition}).  While this satisfies part of the condition for determining an extreme point, $\bar{z}$ will not be optimal as consensus will not be reached, resulting in the constraints not being met, \cite{Kvaternik2012}.  

As discussed in Section \ref{ssec:networkedSystems}, $\Lbig z$ will always point along lines perpendicular to the constraint set.  This means that $\bar{z}$ will be a point where $\frac{\partial f}{\partial z}(\bar{z})$ points along a line perpendicular to the constraint set.  Now, let $\bar{z}' = z(t)$ as $t \longrightarrow \infty$ where $\dot{z} = - \Lbig z$ and $z(0) = \bar{z}$.  Since $\Lbig z$ points directly to the constraint set, $\bar{z}'$ will be the point of intersection of the constraint set orthogonal to $\bar{z}$.  Therefore, if $f(z)$ is such that the gradient will always point directly at the unconstrained optimal point, then the result of the optimization strategy proposed in \cite{Nedic2009} can converge arbitrarily close to the optimal value.  An example of such a convex function is shown in Figure \ref{fig:Constrained_Opt}.

More important to our discussion is that a constantly weighted consensus term will not have enough control authority to pull the state of the system all of the way to the optimal point. However, it will help to guide the state to, and maintain it on, a line in which the only additional control effort need be in the direction of consensus.  This further motivates the choice of adding an integral control term.

\subsection{Example}
We continue the example started in Section \ref{sec:dualDecomposition} using the consensus-based distributed optimization.  Two scenarios are shown for the gain: $k_G = 1$ which violates the diminishing or adaptive gain requirement and $k_G = \frac{1}{1 + .1 t}$ which satisfies the requirement.  The results are shown in Figure \ref{fig:convex_P} and Table \ref{tbl:convex}.  The constant gain example exhibits the very desirable attribute of quick convergence, however suffers in performance as the values do not converge and the optimal value is not reached.  On the other hand, the fading gain example shows that the optimal values can be achieved, but convergence suffers as expected.  Both exhibit the desirable attribute of very little oscillation in the solution, however, the fading gain does show a significant increase in overshoot.

\begin{figure*}[!t]
\begin{center}$
\begin{array}{cc}
\includegraphics[width=.45\linewidth]{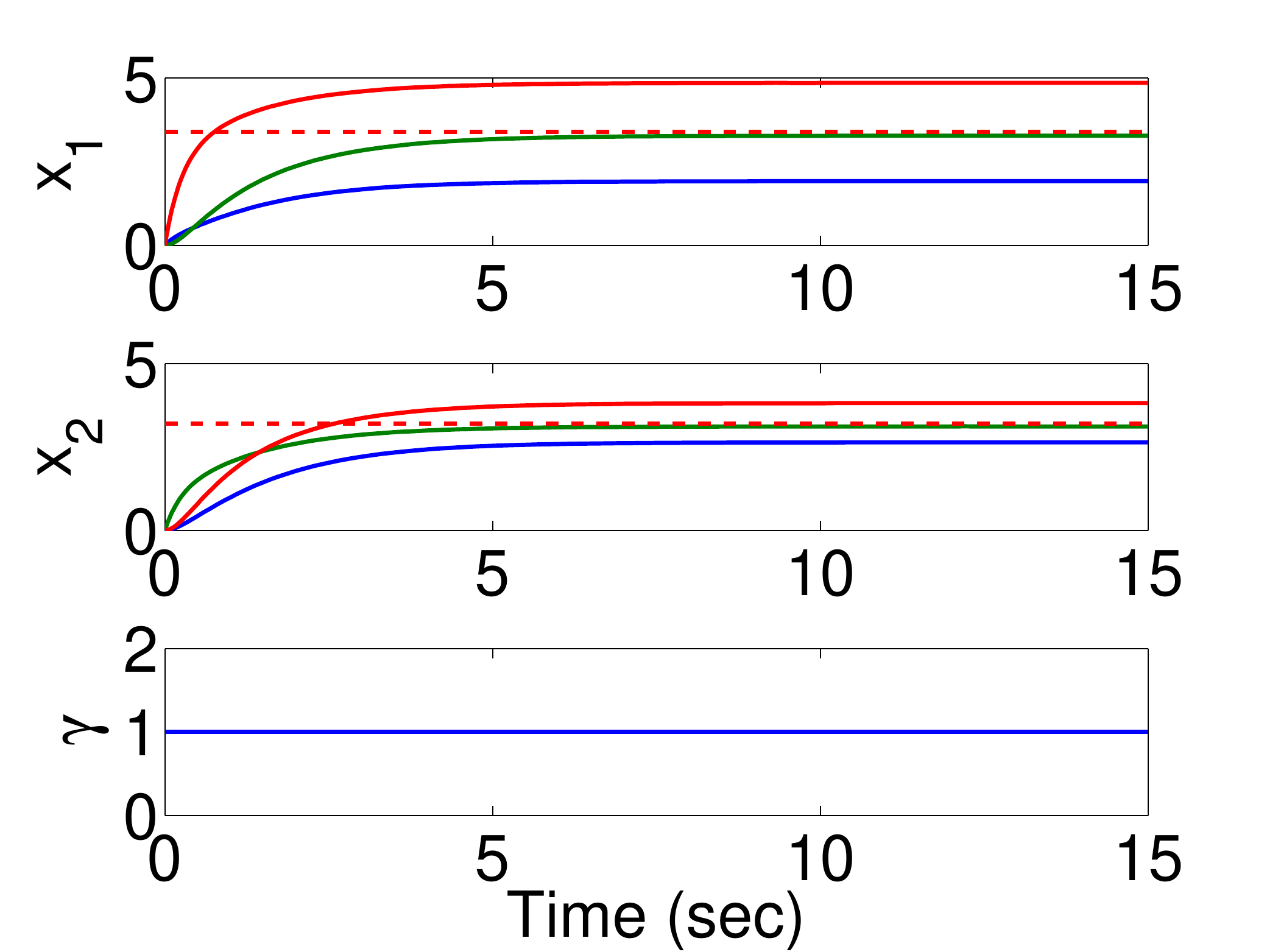} &
\includegraphics[width=.45\linewidth]{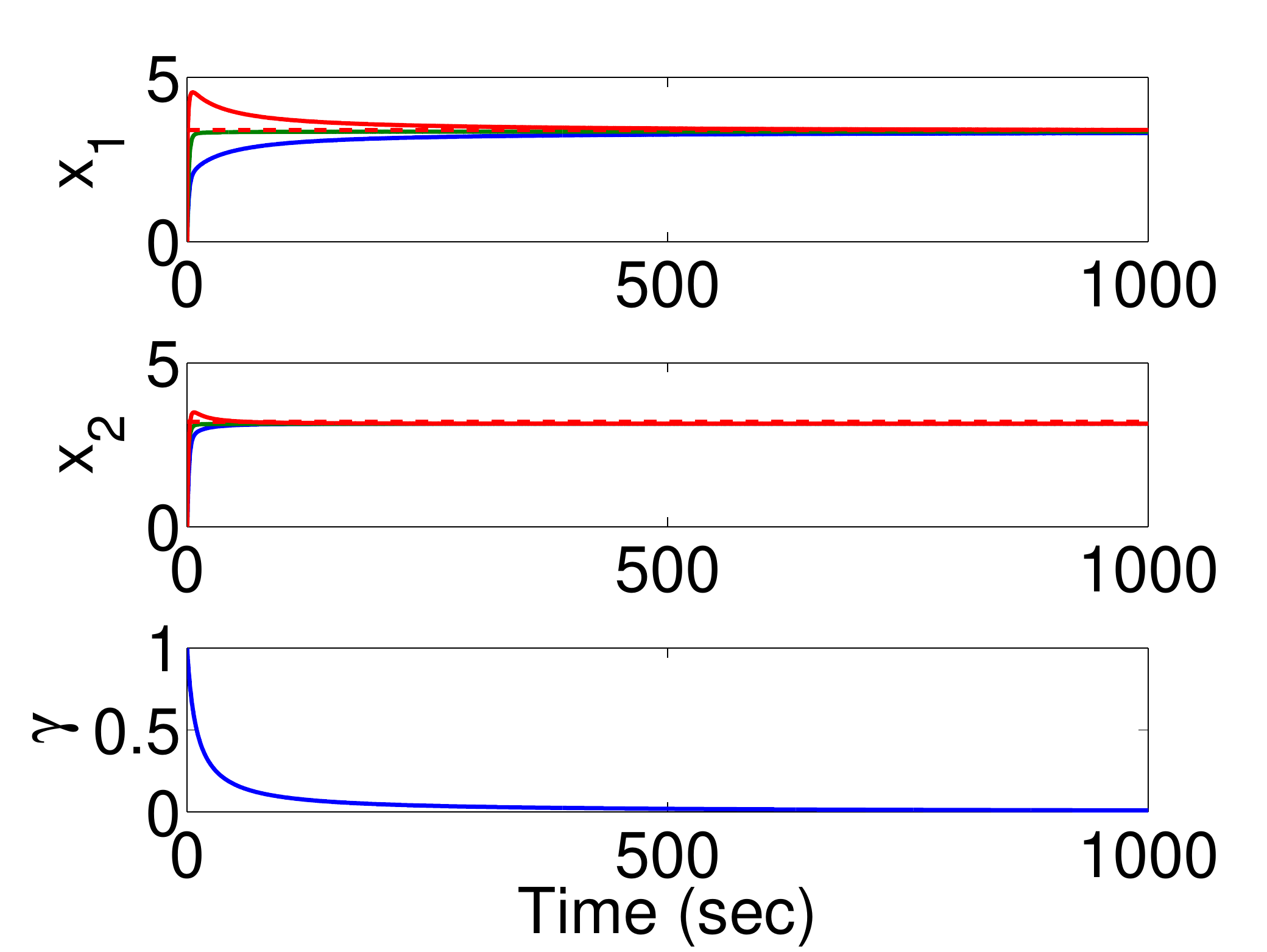} 
\end{array}$
\end{center}
\caption{This figure shows the result of optimizing using consensus for the problem given in (\ref{eq:convex_example}) for both a constant and fading value for $k_G$ on the left and right respectively}%
\label{fig:convex_P}
\end{figure*}

\begin{remark}
In presenting examples throughout the remainder of the paper, the results from both a constant and a diminishing gain will be shown.  We do this instead of trying to tune the ``stopping'' criteria given in \cite{Nedic2009}.  The result of a constant gain will emphasize the possible convergence rate and a diminishing gain will emphasize the ability to reach optimality.
\end{remark}
\section{PI distributed optimization}  
\label{sec:PIOptimization}

In Sections \ref{sec:dualDecomposition} and \ref{sec:consensus}, dual decomposition and the consensus method for distributed optimization were introduced and the parallel to integral and proportional control laws was seen.  In this section, we show that these two methods can be combined to create a new distributed optimization method which is guaranteed to converge to the global minimum, much like integral control can be added to proportional control to achieve zero steady-state error with good convergence properties.

This section begins by developing the PI distributed optimization method and proving that it converges to the global minimum.  The relationship to PI control is then discussed and the  example of the previous two sections is finished.

\subsection{PI distributed optimization algorithm}
\label{ssec:pi_optimization}
The PI distributed optimization algorithm is formed by noting that the dual-decomposition method discussed in Section \ref{sec:dualDecomposition} shares similar structure with the consensus method discussed in Section \ref{sec:consensus}.  Each has a gradient term along with an additional term added to enforce equality between agents.  Dual-decomposition guarantees convergence to the goal, but has an undesirable transient, oscillatory behavior.  On the other hand, the consensus method does not converge under constant gains, but has a much more damped transient response.  Therefore, we join the two methods in a desire to achieve the benefits of each.

Combining equations (\ref{eq-dual_decomp_z_dynamics}) and (\ref{eq-dual_decomp_mu_dynamics}) with (\ref{eq:consensus_based}), the aggregate dynamics can be expressed as 
\begin{equation} 
\begin{split}
  \dot{z}  =  -k_G\frac{\partial f}{\partial z}^T   - k_P \Lbig z  - k_I'\D \mu  \\
  \dot{\mu}  = k_I'\D^T z.
\end{split}
\label{eq:PI_dynamics}
\end{equation}

Similarly, (\ref{eq:x_singleAgent_I}) and (\ref{eq:mu_singleAgent_I}) can be combined with (\ref{eq:consensus}) to get a distributed implementation as follows:
\begin{equation}
\dot{x}_i = -k_G \frac{\partial f_i}{\partial x}^T(x_i) -  k_P \sum_{j \in \N_i}  (x_i - x_j) - k_I' \sum_{j \in \N_i} \mu_i^j.
\end{equation}
\begin{equation}
\dot{\mu}_i^j = k_I'(x_{i} - x_{j})
\end{equation}
where we again define $\mu_i^j$ as in (\ref{eq:mu_singleAgent_I}).  As in Sections \ref{sec:dualDecomposition} and \ref{sec:consensus}, {\it the only information exchange required between agents is the exchange of the state vectors between neighboring agents}.

To show convergence to the global minimum, we give the following two theorems.

\begin{theorem} \label{th-PI_global_stable}
Given Assumptions \ref{as:strictlyconvex}, \ref{as:solutionexistence}, and \ref{as:ConnectedGraph} as well as the dynamics in (\ref{eq:PI_dynamics}), the saddle point $(\dot{z}, \dot{\mu}) = (0,0)$ is globally asymptotically stable.
\end{theorem}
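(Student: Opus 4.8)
The plan is to reuse the Lyapunov machinery of Theorem~\ref{th-dual_decomp} essentially verbatim, since the PI dynamics in (\ref{eq:PI_dynamics}) differ from the pure dual-decomposition dynamics only by the extra proportional term $-k_P \Lbig z$ in the $\dot{z}$ equation. I would take the same candidate $V = \frac{1}{2}(\dot{z}^T \dot{z} + \dot{\mu}^T \dot{\mu})$, which is positive definite and radially unbounded in $(\dot{z}, \dot{\mu})$, and differentiate along trajectories. Using $\ddot{z} = -k_G \frac{\partial^2 f}{\partial z^2} \dot{z} - k_P \Lbig \dot{z} - k_I' \D \dot{\mu}$ and $\ddot{\mu} = k_I' \D^T \dot{z}$, the key observation is that the two terms carrying $k_I'$ form the scalar expression $-k_I' \dot{z}^T \D \dot{\mu} + k_I' \dot{\mu}^T \D^T \dot{z}$, which cancels exactly as in Theorem~\ref{th-dual_decomp}. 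What remains is $\dot{V} = -\dot{z}^T H(z) \dot{z} - k_P \dot{z}^T \Lbig \dot{z}$, with $H(z) = k_G \frac{\partial^2 f}{\partial z^2} \succ 0$ by Assumption~\ref{as:strictlyconvex} and $\Lbig \succeq 0$ by property~(2); hence $\dot{V} \le 0$, with the proportional term only strengthening the inequality.

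Next I would characterize the set $S = \{(\dot{z}, \dot{\mu}) \mid \dot{V} = 0\}$. Since $H(z) \succ 0$, the equality $\dot{V} = 0$ forces $\dot{z} = 0$ (the added $-k_P \dot{z}^T \Lbig \dot{z}$ term is nonpositive and therefore cannot rescue any $\dot{z} \neq 0$), so $S = \{\dot{z} = 0,\, \dot{\mu} \in \R^{Mn}\}$, identical to the dual-decomposition case. Because $\dot{V}$ has no dependence on $\dot{\mu}$, I would then invoke LaSalle's invariance principle to pin down the largest invariant subset of $S$.

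On that invariant set, $\dot{z} \equiv 0$ gives $\ddot{z} \equiv 0$; substituting $\dot{z} = 0$ into the expression for $\ddot{z}$ annihilates both the Hessian term and, crucially, the proportional term $-k_P \Lbig \dot{z}$, leaving $\ddot{z} = -k_I' \D \dot{\mu} = -k_I' \D (k_I' \D^T z) = -k_I \Lbig z = 0$, where $k_I = k_I'^2$. By Assumption~\ref{as:ConnectedGraph} and property~(3) of the aggregate Laplacian, $\Lbig z = 0$ implies $z = \alpha \otimes \mathbf{1}$ for some $\alpha \in \R^n$, whence $\dot{\mu} = k_I' \D^T(\alpha \otimes \mathbf{1}) = 0$. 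Thus the only complete trajectory that stays in $S$ is the equilibrium $(\dot{z}, \dot{\mu}) = (0,0)$, and radial unboundedness of $V$ upgrades this to global asymptotic stability.

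I expect no serious obstacle, because the entire point of the argument is that introducing the proportional term is \emph{free} for stability: it contributes a negative semidefinite term to $\dot{V}$ and vanishes from $\ddot{z}$ exactly where the LaSalle analysis evaluates it (at $\dot{z} = 0$). The one place warranting care is confirming that the $k_I'$ cross terms still cancel in $\dot{V}$ once the $k_P \Lbig$ term is present; but since that term enters $\ddot{z}$ only through $\dot{z}$ and never through $\dot{\mu}$, it does not touch the cancellation, and the $k_P$ contribution lands squarely in the already-nonpositive quadratic form. The companion statement that this equilibrium is in fact the global minimum would then mirror the two-step (feasibility plus regular-point) argument of Theorem~\ref{th-dual_global_min}.
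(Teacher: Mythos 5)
Your proposal is correct and takes essentially the same approach as the paper's own proof, which likewise reuses the Lyapunov function and LaSalle argument of Theorem~\ref{th-dual_decomp}, noting only that (i) the quadratic form in $\dot{V}$ acquires the nonpositive term $-k_P \dot{z}^T \Lbig \dot{z}$ --- which the paper absorbs by redefining $H(z) = k_G \frac{\partial^2 f}{\partial z^2} + k_P \Lbig \succ 0$ --- and (ii) the extra $-k_P \Lbig \dot{z}$ term in $\ddot{z}$ vanishes on the set where $\dot{z} = 0$, so the invariance argument is unchanged. The only cosmetic difference is that you carry the proportional contribution as a separate term in $\dot{V}$ rather than folding it into $H(z)$.
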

\begin{proof}
The same proof can be used as was used in Theorem \ref{th-dual_decomp} with two modifications.
\begin{enumerate}
\item [(1)] $H(z) = k_G \frac{\partial ^2 f}{\partial z^2} + k_P \Lbig$, but $H(z) \succ 0$ still holds.
\item [(2)] $\ddot{z} = -k_G\frac{\partial ^2 f}{\partial z^2} \dot{z} - k_P\Lbig \dot{z} - k_I'\D \dot{\mu}$ which when $\dot{z} = 0$ still simplifies to $\ddot{z} = -k_I'\D \dot{\mu}$
\end{enumerate}
\end{proof}

\begin{theorem} \label{th-PI_global-min}
Given Assumptions \ref{as:strictlyconvex}, \ref{as:solutionexistence}, and \ref{as:ConnectedGraph} as well as the dynamics in (\ref{eq:PI_dynamics}), the saddle point $(\dot{z}, \dot{\mu}) = (0,0)$ corresponds to the global minimum. 
\end{theorem}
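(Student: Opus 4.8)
The plan is to mirror the two-step structure of the proof of Theorem \ref{th-dual_global_min}: first establish that the saddle point is feasible and satisfies the first-order optimality condition (\ref{eq:equality_condition}), then show it is a regular point. The crucial observation, and the reason the newly added proportional term causes no difficulty, is that at the equilibrium the consensus/proportional term vanishes identically, so the stationarity condition collapses onto exactly the one already analyzed for pure dual-decomposition.

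First I would read off feasibility from the $\mu$-dynamics. At the saddle point $\dot{\mu} = k_I' \D^T z = 0$ forces $\D^T z = 0$, so consensus is reached and the constraint $h(z) = \D^T z = 0$ is satisfied, exactly as argued in Theorem \ref{th-dual_decomp}. Since consensus means $z = \alpha \otimes \mathbf{1}$ for some $\alpha \in \R^n$, property (3) of the aggregate Laplacian in Section \ref{ssec:networkedSystems} gives $\Lbig z = \D \D^T z = 0$. Hence the proportional term $k_P \Lbig z$ drops out of the $\dot{z}$ equation precisely at the equilibrium.

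Setting $\dot{z} = 0$ in (\ref{eq:PI_dynamics}) and invoking this cancellation reduces the stationarity condition to $0 = -k_G \frac{\partial f}{\partial z}^T - k_I' \D \mu$, which is identical to (\ref{eq:minCond_from_zdot}) from the dual-decomposition proof. Therefore, defining $\lambda = \frac{k_I'}{k_G} \mu$ satisfies (\ref{eq:equality_condition}) verbatim. The regular-point argument then carries over unchanged: one restricts to a minimum spanning tree $\G_T$, writes $D = \begin{bmatrix} D_T & D_R \end{bmatrix}$ with $D_R = D_T \delta$, and uses $\D_R = \D_T \Delta$ to transfer a multiplier $\lambda$ satisfying (\ref{eq:equality_condition}) for $\G$ into a multiplier $\lambda_T = \lambda' + \Delta \lambda''$ for the linearly independent constraint $\D_T^T z = 0$. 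This step depends only on the constraint structure and Assumption \ref{as:ConnectedGraph}, neither of which is altered by the proportional term. Finally, by Assumption \ref{as:strictlyconvex} the unique extremum is the global minimum, which completes the proof.

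The main obstacle is conceptual rather than computational: one must verify that the extra $k_P \Lbig z$ term genuinely disappears at the equilibrium, rather than perturbing the optimality condition and spoiling the correspondence to (\ref{eq:minCond_from_zdot}). Once the identity $\Lbig z = 0$ at consensus is established, the entire remainder of the argument is a direct reuse of the reasoning already given for Theorem \ref{th-dual_global_min}, so the theorem follows with essentially no new technical work.
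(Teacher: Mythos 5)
Your proposal is correct and follows essentially the same route as the paper: the paper's proof likewise reduces Theorem \ref{th-PI_global-min} to Theorem \ref{th-dual_global_min} by observing that at a feasible (consensus) point $\Lbig z = 0$, so the proportional term vanishes and the stationarity condition collapses to (\ref{eq:minCond_from_zdot}). Your version merely spells out the feasibility step (via $\dot{\mu}=0 \Rightarrow \D^T z = 0$) and the reuse of the regular-point argument that the paper leaves implicit.
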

\begin{proof}
The same proof can be used as was used in Theorem \ref{th-dual_global_min} by noting for a feasible solution, $\Lbig z = 0$.  This will give the same equation for $\dot{z}$ as given in (\ref{eq:minCond_from_zdot}).
\end{proof}

The proofs of Theorems \ref{th-PI_global_stable} and \ref{th-PI_global-min} basically show that adding the consensus term does not break the convergence properties of the dual-decomposition method of Section \ref{sec:dualDecomposition}, but do nothing to speak of the benefit of adding the consensus term.  To see the benefit of the consensus term, consider the following problem:
\begin{equation}
\min_z k_G f(z) + \frac{k_P}{2} z^T \Lbig z.
\label{eq:PIProblem}
\end{equation}
$$
\mbox{s.t. } k'_I \D^T z = 0
$$
This is the same problem as given in (\ref{eq:constrained_opt}), but with the addition of a term proportional to the square of the constraint (recall $\D \D^T = \Lbig$).  Adding the square of the constraint is known as the augmented Lagrangian method, which has been shown to add dampening to the dual optimization problem, improving convergence, (see \cite{Boyd2011} for a discussion and analysis of the augmented Lagrangian).

Following the same method to develop dynamic update laws as in Section \ref{sec:dualDecomposition}, the following dual optimization problem would be solved:
\begin{equation} 
 \max_{\mu} \min_{z} \Bigl( k_G f(z) + k'_I z^T \D \mu + \frac{k_P}{2} z^T \Lbig z  \Bigr),
\end{equation}
with the resulting dynamics being the same as (\ref{eq:PI_dynamics}).  Thus, adding in a consensus term corresponds to modifying the problem to solve the augmented Lagrangian, producing the desired dampening effect without modifying the guarantee of convergence.

\subsection{Connections to PI control}
As with the previous two distributed optimization techniques, we note the similarity of this distributed optimization framework with a PI control framework. The Lagrange multiplier, $\mu$, can be expressed in the same form as done in (\ref{eq:mu_integral_control}).  Thus, the following expression for $\dot{z}$ can be obtained:
\begin{equation}
\dot{z}(t) = -k_G \frac{\partial f}{\partial z}^T - k_I \Lbig \int_{t_0}^t z(\tau)d\tau - k_P \Lbig z(t).
\label{eq:PI_control}
\end{equation}
This is the same equation that was derived for a PI control law in Section \ref{ssec:constrained_opt}.  We can therefore expect to see properties of PI control such as increased overshoot resulting from decreased dampening of the proportional control, zero steady-state error due to the integral term (which has already been proved), and faster settling time than pure integral control, e.g. \cite{GF2001}.

While there exist many distributed optimization techniques, e.g. \cite{Palomar2010, Wei2012, Kvaternik2012, Wang2010, Wang2011a, Terelius2011}, it is important to note the similarity of the method in this section to that presented in \cite{Wang2010} and extended in \cite{Wang2011a, Gharesifard2012}.  While the development of the algorithm in \cite{Wang2010} is different than the development in this paper, it can be expressed as using the augmented Lagrangian to solve the following problem:
\begin{equation}
	\min_z f(z),
\end{equation}
$$
\mbox{s.t. } \Lbig z = 0
$$
where the resulting dynamics can be expressed as
\begin{equation}
\dot{z} = - \frac{\partial f}{\partial z}^T(z(t)) - \Lbig z(t) - \Lbig \mu(t),
\end{equation}
\begin{equation}
\dot{\mu} = \Lbig z,
\end{equation}
and now $\mu \in \R^{Nn}$ as opposed to $\mu \in \R^{Mn}$ as before.  The only difference between this method and the method we have developed is simply that the constraint is expressed in terms of the graph Laplacian instead of the incidence matrix.  This would result in an equation similar to (\ref{eq:PI_control}), except with an $\Lbig^2$ term instead of an $\Lbig$ term in front of the integral.  

While this may seem like a small difference, due to the fact that we have utilized dual-decomposition in the development of the integral term, {\it we form a PI distributed optimization technique which requires half of the communication} that the technique developed in \cite{Wang2010} requires.  This can be seen from the fact that the incidence matrix, used in dual-decomposition, allows each agent to update the necessary values of $\mu$ using only local information.  However, using the Laplacian matrix to express the constraint forms an $\Lbig^2$ term which requires that either each agent knows their neighbors' neighbors states or each neighbor must additionally communicate $\mu_i$ at each optimization step. 

\subsection{Example}

\begin{figure*}[!t]
\begin{center}
\includegraphics[width=.4\linewidth]{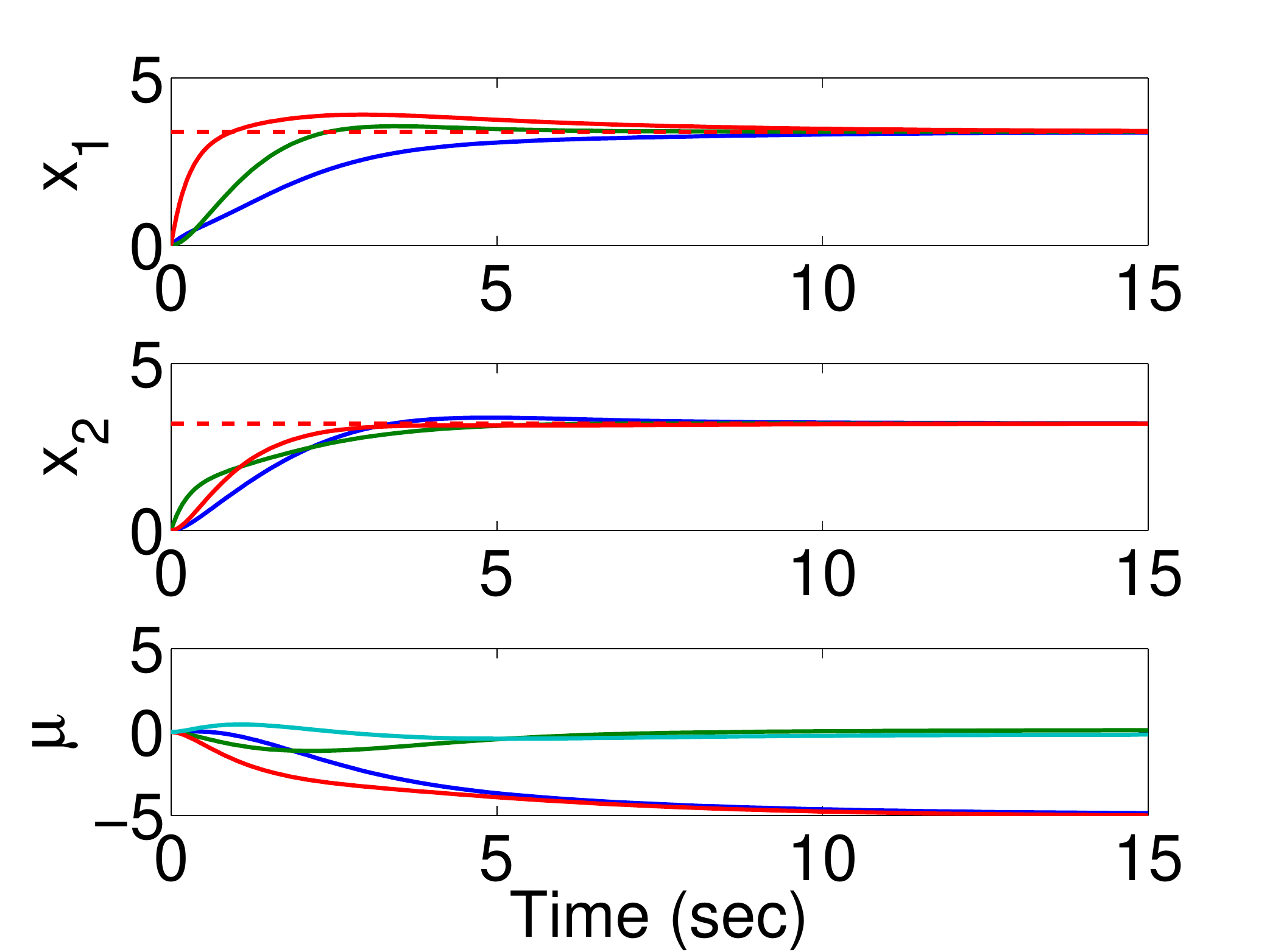} 
\caption{This figure shows the results from the convex optimization example using PI distributed optimization}%
\label{fig:convex_PI}
\end{center}
\end{figure*}

We continue the example in (\ref{eq:convex_example}) using the newly derived dynamics.  In Figure \ref{fig:convex_PI}, it is apparent that the PI optimization is able to achieve zero error while converging quickly and with little oscillation.  Furthermore, Table \ref{tbl:convex} shows that settling time and overshoot are in between the values of pure proportional and pure integral control, as expected.  These attributes will be emphasized in the examples in the following sections as more complex problems are presented.

\section{Scalable multi-agent formulation}  
\label{sec:scalability}
Up until this point, we have presented the algorithms in terms of a framework where each agent keeps its own version of the entire state vector as done in previous works, e.g. \cite{Kvaternik2012, Wang2010, Nedic2009, Terelius2011}.  This is not necessary if some of the agents' individual costs do not depend upon all of the elements of the parameter vector being optimized.  An example of this will be shown at the end of the section where each agent introduces more parameters to be optimized, typical in multi-robot scenarios, e.g. \cite{Keviczky2006, Droge2013, Rantzer2009, Dunbar2006}.  However, each agents' cost depends solely on the parameters introduced by its neighbors.  In such a situation, it is not necessary for each agent to keep track of the entire parameter vector and, in fact, doing so is not scalable to large numbers of agents.  

In this section, we address this in a similar fashion to \cite{Wang2011a} and show that it fits quite naturally into the framework of the previous sections.  First, it is shown that even with the reduction of parameters the previous theorems still hold.  Then, the reduction of parameters will lead to a slight reformulation of the PI distributed optimization algorithm.  Finally, we end this section with an example where drastic improvement in convergence is achieved by reducing the number of variables that each agent must maintain. 

\subsection{Eliminating unneeded variables}
When each agent does not have an opinion about a parameter in the parameter vector, the problem can be simplified to eliminate redundancies.  Similar to \cite{Wang2011a}, let $I_j = \{ i| f_i$ depends on the element $j\}$ be the set of agents which depend on element $j$ with cardinality $N_j = |I_j|$.  As agents no longer needs to keep track of the entire vector, the definition of $z_j$ needs to be slightly modified to $z_j \triangleq vec[x_{ij}]_{i \in I_j} \in \mathbb{R}^{N_i}$, a subset of the elements originally contained in $z_j$.  Now, the aggregate vector can be defined as $z = \begin{bmatrix}
z_1^T & ... &z_n^T
\end{bmatrix}^T \in \mathbb{R}^{N_1 + ... + N_n}$.

Let the induced subgraphs, $\G_i(\V_i, \E_i)$, be defined as $\V_i = \{ v_j \in \V | j \in I_i \} \subseteq \V$ and $\E_i = \{(v_i, v_j) \in \E | v_i, v_j \in \V_i \}$.  Finally, the following assumption is made to allow for convergence
\begin{assumption}
$\G_i$ is connected $\forall i \in \{1, ..., N \}$. 
\label{as:subgraph_connected}
\end{assumption}
Note that, given Assumption \ref{as:ConnectedGraph}, Assumption \ref{as:subgraph_connected} is not limiting.  If there exists $i$ s.t. $\G_i$ is not connected, one needs only to extend $\G_i$ to contain nodes originally in $\G$ that will connect the different connected components of $\G_i$.  

Along this same line of reasoning, we briefly touch upon a topic of study which is out of the scope of this paper, but worth mentioning.  There may be simple cases in which choosing $\G_i$ such that it is connected with the smallest number of vertices possible will not result in the fastest convergence to the global minimum.  There has been much work done on the convergence of the consensus equation and the network topology plays a key role in determining the convergence rate \cite{Mesbahi2010, Olfati-Saber2004, Jadbabaie2003}.   Therefore, to achieve the fastest performance, selection of the sub-graph for each variable could be more complicated than simply choosing the minimally connected sub-graph.

In any case, given $\G_i$, the corresponding incidence matrix, $D_i \in \mathbb{R}^{N_i \times M_i}$, where $M_i = |\E_i|$, and graph Laplacian, $L_i \in \mathbb{R}^{N_i \times N_i}$, can be defined.  This allows for the definition of the aggregate matrices $\D \triangleq diag(D_1, ..., D_n)$ and $\Lbig = diag(L_1, ..., L_n)$.  These aggregate matrices will continue to exhibit the same properties mentioned in Section \ref{ssec:networkedSystems} as they can still be expressed as $n$ connected components of a graph.  The only difference is that the connected components do not have the same structure.  As these properties still hold, Theorems \ref{th-dual_decomp} through \ref{th-PI_global-min} will also hold using the newly defined augmented matrices and addition of Assumption \ref{as:subgraph_connected}.

\subsection{Distributed implementation}
While the aggregate dynamics of the multi-agent system can be expressed without any change, the dynamics executed by each agent will change slightly due to the fact that each variable in the parameter vector will have a different set of agents that are maintaining a version of it.  We express the dynamics of a single variable as follows:
\begin{equation}
\dot{x}_{ij} = -k_G \frac{\partial f_i}{\partial x_{ij}} - k_P \sum_{k \in \{\N_i \cap I_j \} } (x_{ij} - x_{kj}) - 
k_I'\sum_{k \in \{\N_i \cap I_j \} } \mu_{i,j}^k
\end{equation}
\begin{equation}
\dot{\mu}_{i,j}^k = k_I' (x_{ij} - x_{kj}).
\end{equation}
Note that the algorithms in Sections \ref{sec:dualDecomposition} and \ref{sec:consensus} can be achieved by setting $k_P = 0$ and $k_I' = 0$ respectively.  Again, we see that {\it each agent is able to execute this algorithm using local information and only communicating its version of the parameters being optimized with its neighbors}.

\subsection{Ring example}
We now present an example in which scaling down the number of parameters that each agent worries about drastically improves the performance of the system.  Consider the ``Ring'' network depicted in Figure \ref{fig:Ring} where each agent can communicate with agents to each side.  In this example, each agent has a variable that ``belongs'' to it and it wants to balance having its value be close to its neighbors' value as well as a nominal value.  This can be expressed in the form of the following quadratic cost:
\begin{equation}
	f_i = (x_{i,i-1} - x_{ii})^2 + (x_{ii} - x_{d_i})^2 + (x_{ii} - x_{i,i+1})^2
	\label{eq:simlified_cost}
\end{equation}
where $x_{d_i} = i$ is the desired value.

\begin{figure*}[!t]
\begin{center}
\includegraphics[width=0.3\linewidth]{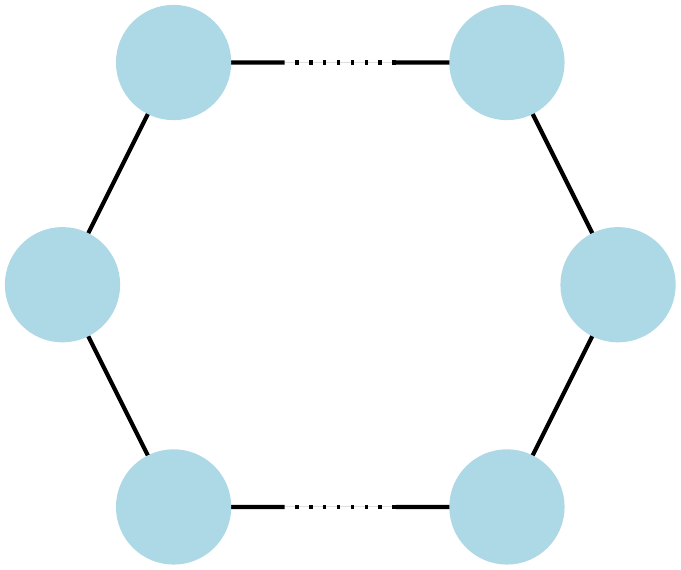} 
\end{center}
\caption{This figure depicts the ``Ring'' network structure used in Section \ref{sec:scalability} }%
\label{fig:Ring}
\end{figure*}

Note that for the formulation in Sections \ref{sec:dualDecomposition}, \ref{sec:consensus}, and \ref{sec:PIOptimization}, each agent would have had to keep track of $N = 20$ variables, corresponding to the aggregate state vector having $400$ elements.  However, this is greatly reduced by following the formulation in this section.  Each agent will only need to keep track of $3$ variables with a total of $60$ variables in the aggregate state vector.  

The results of both representations of the state can be seen in Figure \ref{fig:Simplified_full_simple} and Tables \ref{tbl:simplified_full} and \ref{tbl:simplified_simple}.  Significant improvement can be seen across the board in terms of settling time for reducing the number of variables.  Moreover, the overshoot is drastically improved for both the I and PI distributed optimization methods.  Related to overshoot, it is seen in Figure \ref{fig:Simplified_full_simple} that the oscillation is drastically reduced for dual-decomposition.  

One final observation about the performance of the PI distributed optimization technique is noteworthy.  This example demonstrates the performance of the system when a larger number of variables is in question.  We see in Table \ref{tbl:simplified_full} that the PI distributed optimization significantly outperforms the other methods in terms of convergence.  Moreover, there is a drastic improvement over the dual-decomposition method in terms of overshoot and oscillation as well as an improvement over the consensus method in terms of steady-state error.

Again, we emphasize that this is an extreme example meant to demonstrate the possible utility of reducing the number of variables that each agent deals with.  Conclusions should not be drawn beyond the notion that this may be beneficial as there may be instances in which scaling back as much as possible would not be beneficial.

%

\begin{figure*}
\begin{center}$
\begin{array}{ccc}
\includegraphics[width=.31\linewidth]{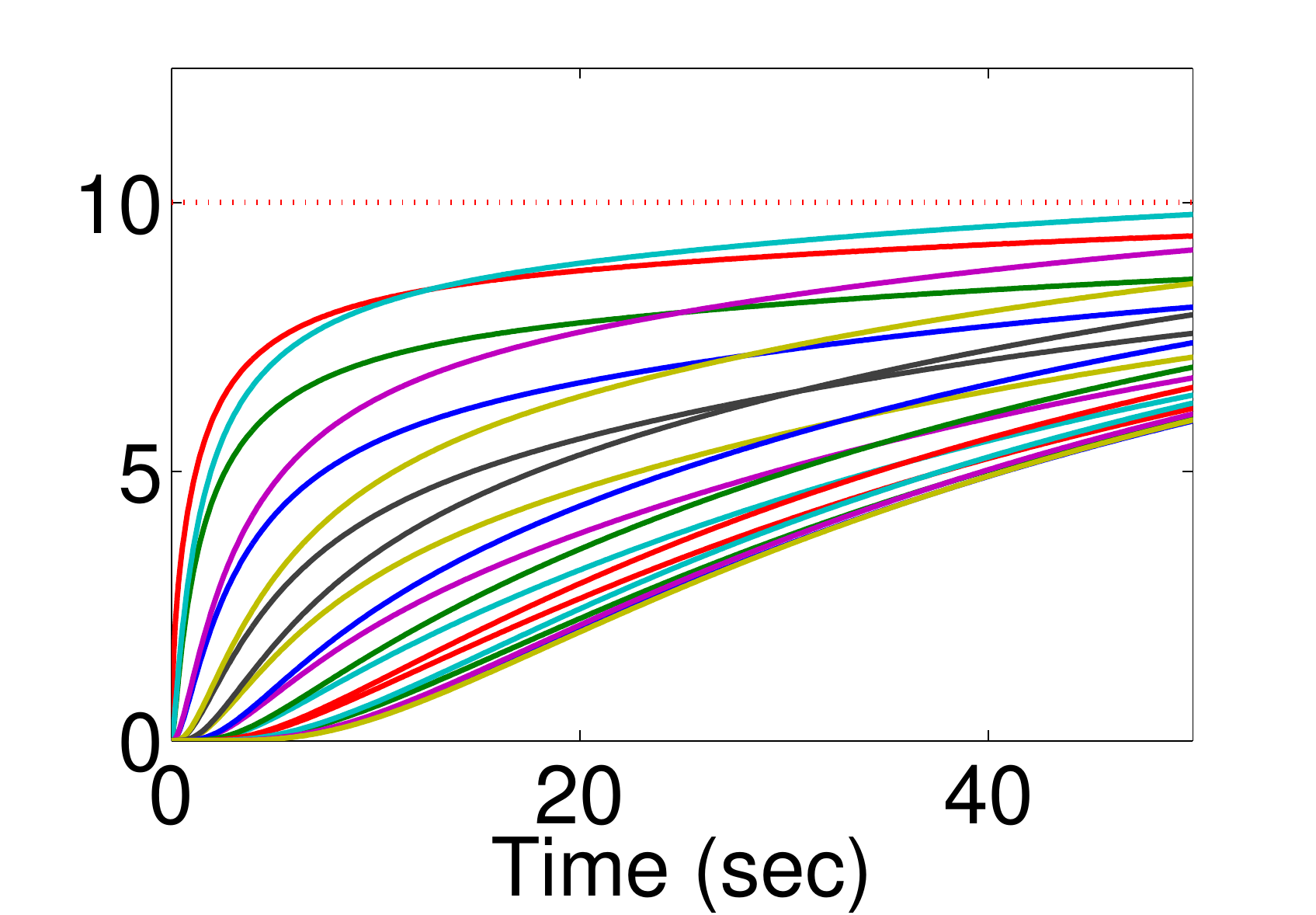} &
\includegraphics[width=.31\linewidth]{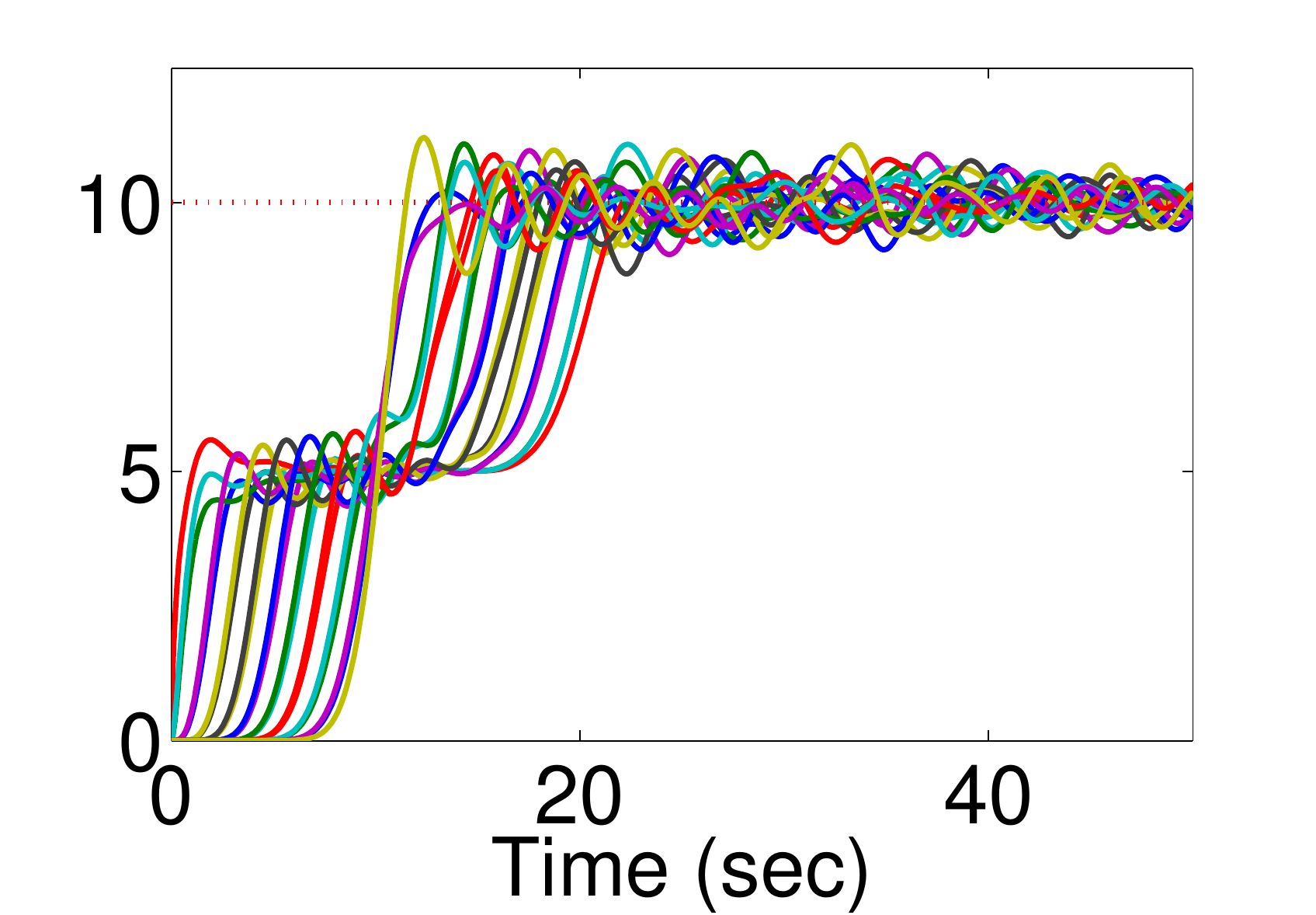} &
\includegraphics[width=.31\linewidth]{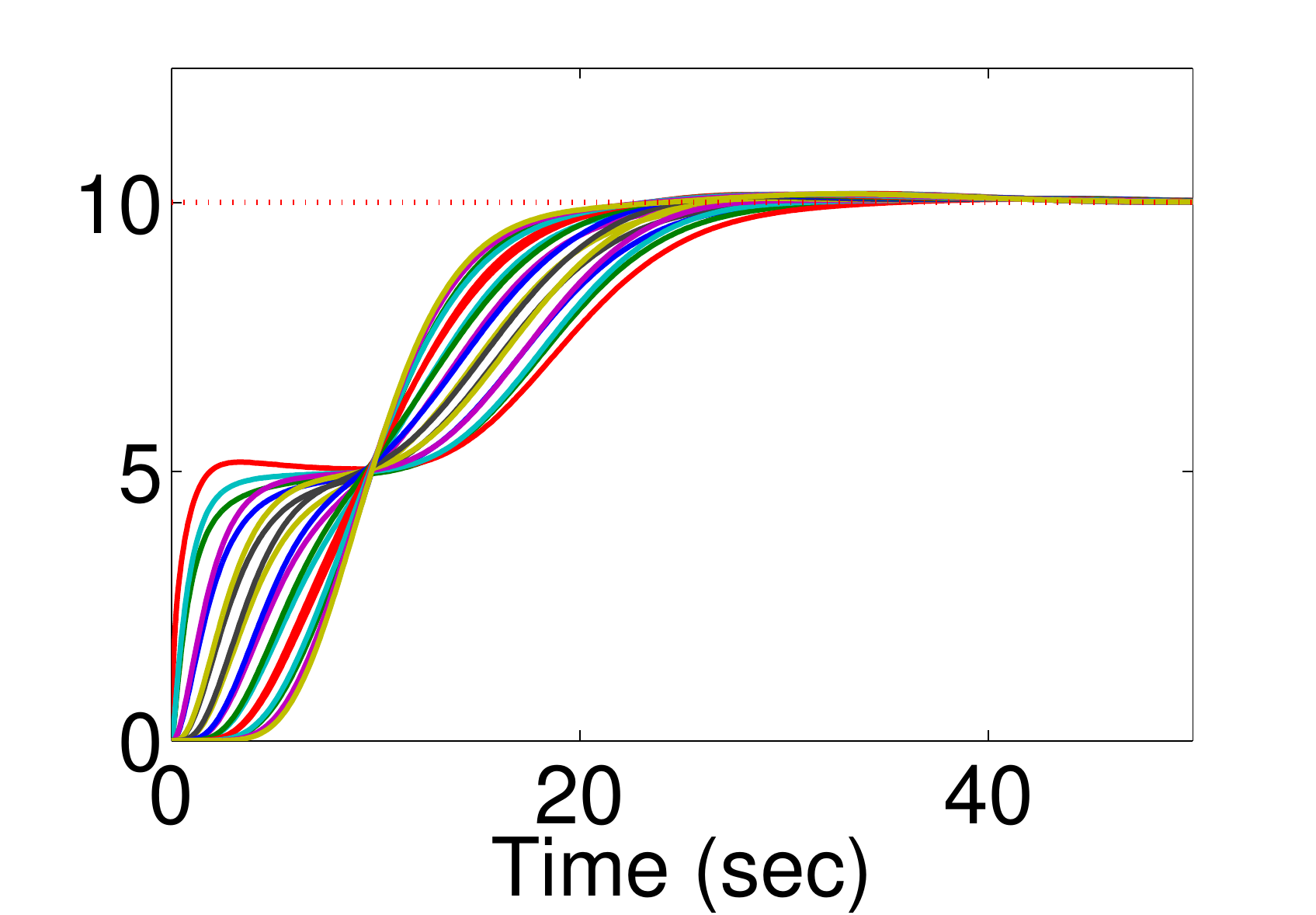} \\
\includegraphics[width=.31\linewidth]{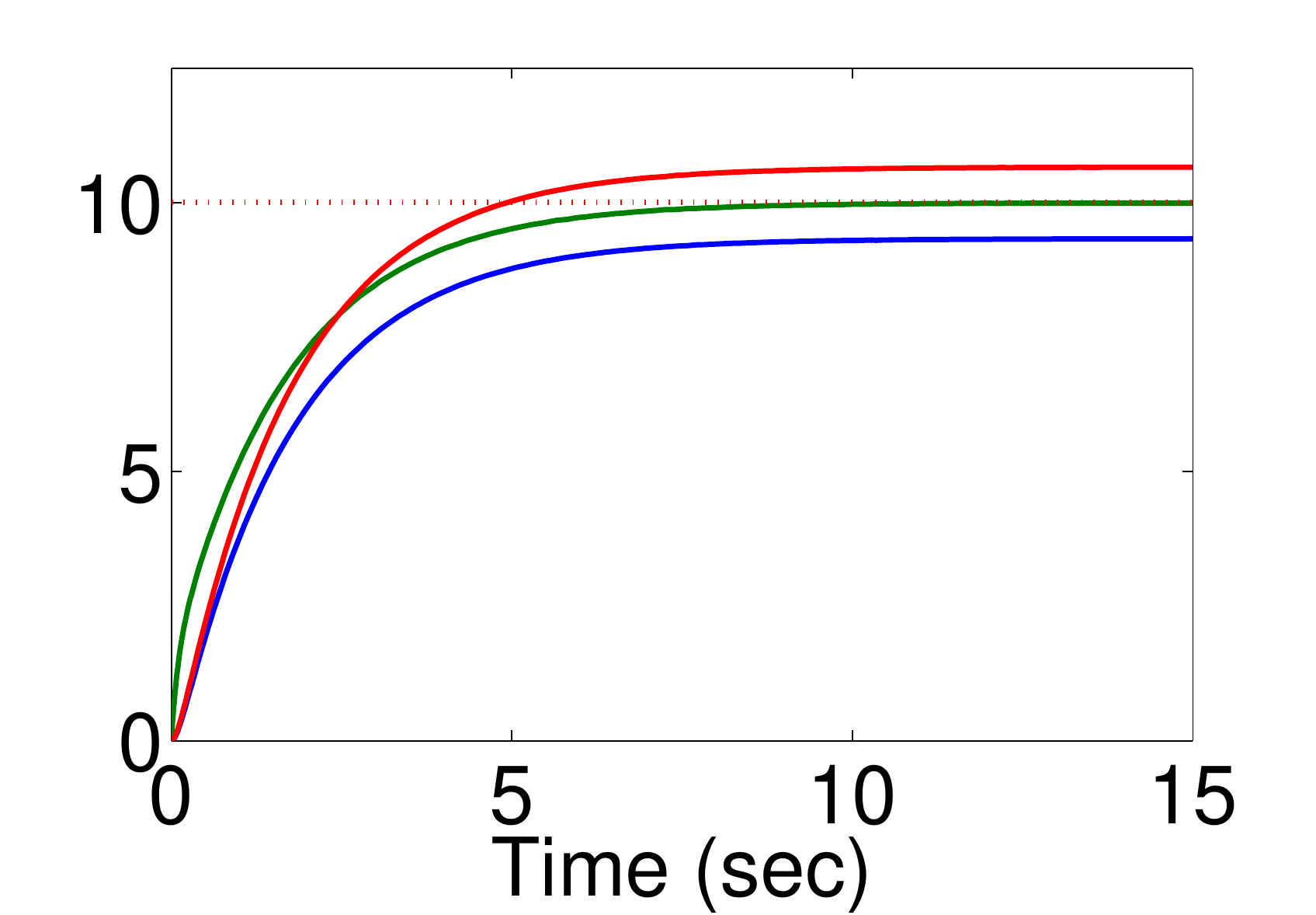} &
\includegraphics[width=.31\linewidth]{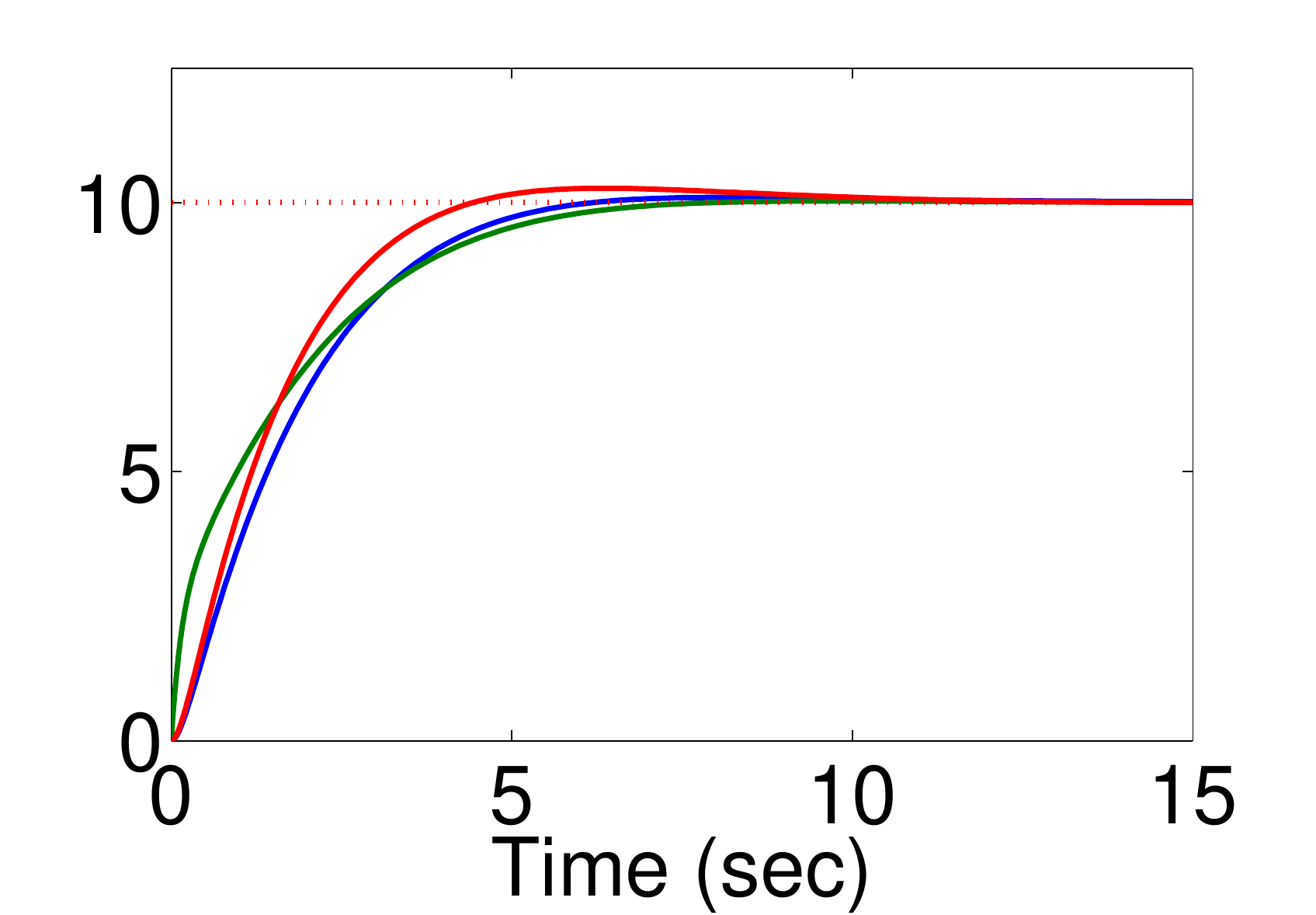} &
\includegraphics[width=.31\linewidth]{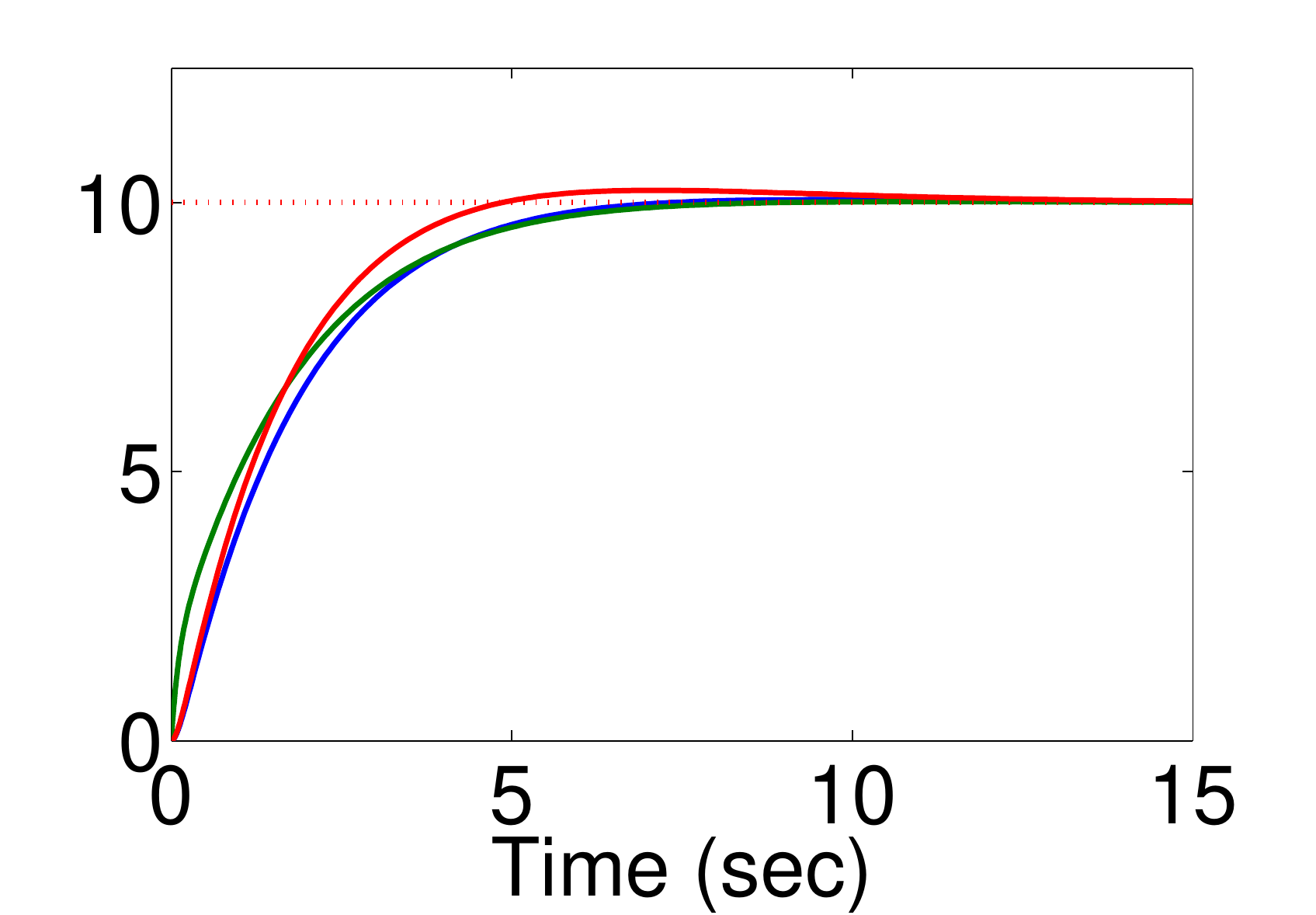} 
\end{array}$
\end{center}
\caption{This figures shows the results of applying the formulation of Sections \ref{sec:dualDecomposition}, \ref{sec:consensus}, and \ref{sec:PIOptimization} on the top row and \ref{sec:scalability} bottom row to solve the problem in (\ref{eq:simlified_cost}).  The left, middle, and right images of each row correspond to consensus, dual-decomposition, and PI distributed optimization techniques.  The results shown are for variable 10.  The solutions in the top row require 20 versions of this variable to converge to the optimal value where the solutions in the bottom row require only 3.}%
\label{fig:Simplified_full_simple}
\end{figure*}

\begin{table}[!t]
\centering
\caption{The results of performing proportional, integral, and PI distributed optimization with each agent optimizing over the full state vector } 
\label{tbl:simplified_full}
\begin{tabular}{|c|c|c|c|c|}
\hline 
& P: $\gamma = 1$ & P: $\gamma = \frac{1}{1 + .1t}$ & I & PI \\ 
\hline 
$M$ & 0.1\% & 0.12\% & 37.5\% & 7.9\% \\ 
\hline 
$t_{10}$ & 120.8 & 659.42 & 115.28 & 29.78 \\ 
\hline 
$t_1$ & 226.58 & 4884.8 & 542.71 & 83.02 \\ 
\hline 
\% error & 55.4\% & 0.92\% & 0\% & 0\% \\ 
\hline 
\end{tabular}
\end{table}

\begin{table}
\centering
\caption{The results of performing proportional, integral, and PI distributed optimization with each agent optimizing over a subset of the state vector }
\label{tbl:simplified_simple}
\begin{tabular}{|c|c|c|c|c|}
\hline 
 & P: $\gamma = 1$ & P: $\gamma = \frac{1}{1 + .1t}$ & I & PI \\ 
\hline 
$M$ & 0.1\% & 35.15\% & 7.12\% & 4.51\% \\ 
\hline 
$t_{10}$ & 5.2 & 82.85 & 6.12 & 6.03 \\ 
\hline 
$t_1$ & 9.47 & 692.57 & 12.78 & 12.33 \\ 
\hline 
\% error & 57.48\% & 5.3\% & 0\% & 0\% \\ 
\hline 
\end{tabular} 
\end{table}

\section{Conclusion}  
\label{sec:conclusion}
We have developed a new, PI distributed optimization method through the combination of dual decomposition and the consensus method for distributed optimization.  This has been done by noting the similarity of the methods when considering the underlying constrained optimization problem.  This new method is able to achieve desirable properties from both of the previous methods.  Namely, faster convergence and dampening due to the proportional term, originating from the consensus based method, and zero steady-state error from the integral term, originating from dual-decomposition.  The method was also modified to allow agents to maintain only the variables they care about, with an example showing drastic improvement in convergence times.   

\section*{Funding}
\small{The work by M. Egerstedt was funded by The Air Force Office of Scientific Research through grant number [2012-00305-01]}

\setstretch{1.0}

\bibliographystyle{IEEEtran}
\bibliography{Submission}

\end{document}